\documentclass[a4paper,draft,11pt]{article}

\usepackage{pb-diagram,amsmath,amssymb}

\usepackage{amsthm}
\usepackage[T2A]{fontenc}
\usepackage[cp1251]{inputenc}
\usepackage[active]{srcltx}

\usepackage{amsthm}

\setlength{\textheight}{8.5in} \setlength{\textwidth}{5.8in} \setlength{\unitlength}{1mm}
\oddsidemargin 0in \topmargin -.3in

\thispagestyle{empty}
\newtheorem{theorem}{Theorem}

\newtheorem{lemma}[theorem]{Lemma}

\newtheorem{proposition}[theorem]{Proposition}
\newtheorem{corollary}[theorem]{Corollary}


\def\myarrow{\ \hbox to 2em{\leaders
\hbox to 0.5ex{\hss\raise 0.55ex\hbox to 0.3ex{\hrulefill}\hss}
\hfill\,\llap{$>$}}\ }


\title{   Duursma's reduced polynomial  \footnote{{\it  2010 Mathematics Subject Classification:} Primary: 94B27,  14G50; Secondary:    11 T71.   \protect\\
{\it Key words and phrases:} Homogeneous weight enumerator of a linear code,
 Duursma's zeta polynomial and Duursma's reduced polynomial of a linear code,
 Riemann Hypothesis Analogue for linear codes,
 formally self-dual linear codes,
 Hasse-Weil polynomial and Duursma's reduced polynomial of a function field of one variable. \protect\\
Supported by   Contract 015/2014 and  Contract 144/2015  with the Scientific Foundation of Kliment Ohridski  University of Sofia. } }
\author{
}
\date{      }

\begin{document}
\maketitle

\centerline{\scshape Azniv Kasparian }
\medskip
{\footnotesize
 \centerline{Section of Algebra, Department of Mathematics and Informatics}
   \centerline{Kliment Ohridski University of Sofia}
   \centerline{ 5 James Bouchier Blvd., Sofia 1164, Bulgaria}
    \centerline{ {\bf email:} kasparia@fmi.uni-soifa.bg }}

\medskip

\centerline{\scshape  Ivan Marinov   }
\medskip
{\footnotesize
 \centerline{ Obecto - Boutique Software Development Company  }
   \centerline{ 23 A Dragan Tsankov Blvd. }
   \centerline{  Sofia, Bulgaria }
     \centerline{ {\bf email:} imarinov@obecto.com }
}

\thispagestyle{empty}

\begin{abstract}
 The weight distribution $\{ \mathcal{W}_C^{(w)} \} _{w=0} ^n$ of a linear code $C \subset {\mathbb F}_q^n$   is   put  in an explicit bijective correspondence with   Duursma's reduced polynomial $D_C(t) \in {\mathbb Q}[t]$ of $C$.
We prove that the Riemann Hypothesis Analogue for a  linear code $C$    requires the formal self-duality of $C$ and imposes an upper bound on the cardinality $q$ of the basic field, depending on the dimension and the minimum distance of $C$.
  Duursma's reduced polynomial $D_F(t) \in {\mathbb Z}[t]$ of the function field $F = {\mathbb F}_q(X)$ of a curve $X$ of genus $g$ over ${\mathbb F}_q$ is shown to  provide a generating function $\frac{D_F(t)}{(1-q)(1-qt)} = \sum\limits _{i=0} ^{\infty} \mathcal{B}_i t^{i}$ for the numbers $\mathcal{B}_i$ of the effective divisors of degree $i \geq 0$ of a virtual function field of a curve of genus $g-1$ over ${\mathbb F}_q$.
\end{abstract}

  Let $\overline{{\mathbb F}_q} = \cup _{m=1} ^{\infty} {\mathbb F}_{q^m}$ be the algebraic closure of a finite field ${\mathbb F}_q$ and
 $X / {\mathbb F}_q \subset {\mathbb P}^N ( \overline{{\mathbb F}_q})$  be a smooth irreducible projective curve of genus $g$, defined over ${\mathbb F}_q$.
 Denote by $F = {\mathbb F}_q(X)$ the function field of $X$ over ${\mathbb F}_q$ and choose $n$ different ${\mathbb F}_q$-rational points
  $P_1, \ldots , P_n \in X( {\mathbb F}_q) := X \cap  {\mathbb P}^N({\mathbb F}_q)$.
Suppose that $G$ is an effective divisor of $F$ of degree $2g-2 < \deg G = m < n$, whose support is disjoint from the support
 of $D = P_1 + \ldots + P_n$.
The space $L(G) := H^0 (X, \mathcal{O}_X(G))$ of the global holomorphic sections of the line  bundle, associated with $G$ will be referred to as to the Riemann-Roch space of $G$.
We put $l(G) := \dim _{{\mathbb F}_q} L(G)$ and observe that the evaluation map
\[
\mathcal{E}_D : L(G) \longrightarrow {\mathbb F}_q^n,
\]
\[
\mathcal{E}_D (f) = (f(P_1), \ldots , f(P_n))  \ \ \mbox{  for  } \ \ \forall f \in L(G)
\]
is an ${\mathbb F}_q$-linear embedding.
Its image $C := {\rm im} ( \mathcal{E}_D) = \mathcal{E}_D L(G)$ is known as  an algebraic  geometry code or Goppa code.
The minimum distance of $C$ is $d(C) \geq n-m$.
For an arbitrary $s \in {\mathbb N}$ let $N_s (F) := |X( {\mathbb F}_{q^s})|$ be the number of the ${\mathbb F}_{q^s}$-rational points of $X$.
Then the formal power series
\[
Z_F(t) := \exp \left( \sum\limits _{s=1} ^{\infty} \frac{N_s (F)}{s} t^s \right)
\]
is called the Hasse-Weil zeta function of $F$.
It is well known (cf. Theorem 4.1.11 from \cite{NX}) that
\[
Z_F(t) = \frac{L_F(t)}{(1-t)(1-qt)}
\]
for a polynomial $L_F(t) \in {\mathbb Z}[t]$ of degree $2g$.
We refer to $L_F(t)$ as to the Hasse-Weil polynomial of $F$.

In \cite{D1}, \cite{D2} Duursma introduces the genus of a linear code $C \subset {\mathbb F}_q^n$ as the deviation $g  := n+1 -  k - d$ of its dimension $k := \dim _{{\mathbb F}_q} C$  and minimum distance  $d$ from the equality in Singleton bound.
Let $\mathcal{W}_C^{(w)}$ be the number of the codewords $c \in C$ of weight $d  \leq w \leq n$.
Then
\[
\mathcal{W}_C(x,y) := x^n + \sum\limits _{w=d(C)} ^n  \mathcal{W}_C ^{(w)} x^{n-w} y^w
\]
 is called the homogeneous weight enumerator of $C$.
Denote by $\mathcal{M}_{n,s} (x,y)$ the homogeneous weight enumerator of an MDS-code of length $n$  and minimum distance $s$.
 Put  $g^{\perp}$  for  the  genus  of the dual code $C^{\perp}$ of $C$ and $r:= g + g^{\perp}$.

 \begin{proposition}   \label{LinearCode}
{\rm (Duursma \cite{D2})}  For an arbitrary ${\mathbb F}_q$-linear $[n,k,d]$-code $C$, which is not contained in a co\-or\-di\-na\-te hyperplane $H_i := \{ x \in {\mathbb F}_q ^n \ \ \vert \ \  x_i  = 0 \}$ of ${\mathbb F}_q^n$, there exist uniquely determined rational numbers $a_0, \ldots , a_r \in {\mathbb Q}$, such that the homogeneous weight enumerator
\begin{equation}   \label{MDSExpression}
\mathcal{W}_C (x,y) = a_0 \mathcal{M}_{n,d} (x,y) + a_1 \mathcal{M}_{n,d+1}(x,y) + \ldots + a_r \mathcal{M}_{n, d+r} (x,y)
\end{equation}
of $C$ is the linear combination of the homogeneous weight enumerators $\mathcal{M}_{n, d+i}(x,y)$ of MDS-codes of length $n$ and minimum distance $d+i$ with coefficients $a_i$   and
\begin{equation}    \label{CoefficientSum1}
P_C(1) = \sum\limits _{i=0} ^r a_i = 1.
\end{equation}
The $\zeta$-polynomial $P_C(t) := \sum\limits _{i=0} ^r a_i t^{i}$ of $C$ is uniquely determined by
\begin{equation}  \label{CoefficientFormula}
\frac{\mathcal{W}_C(x,y)  - x^n}{q-1} = {\rm Coeff} _{t^{n-d}} \left( \frac{P_C (t) }{(1-t)(1-qt)} [ y (1-t) + xt] ^n \right),
\end{equation}
where ${\rm Coeff} _{t^{n-d}} (f(t))$ stands for the coefficient of $t^{n-d}$ in a  formal power series $f(t) \in {\mathbb C} [[t]]$.
\end{proposition}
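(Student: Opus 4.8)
The plan is to treat the homogeneous weight enumerators $\mathcal{M}_{n,s}(x,y)$ of MDS-codes of length $n$ and minimum distance $s$, for $1 \le s \le n+1$ (with $\mathcal{M}_{n,n+1}(x,y) = x^n$, the enumerator of the zero code), as a basis of the $(n+1)$-dimensional space $V_n = \langle x^n, x^{n-1}y, \ldots, y^n\rangle$ of homogeneous polynomials of degree $n$. Since each $\mathcal{M}_{n,s}$ has the shape $x^n + \binom{n}{s}(q-1)x^{n-s}y^s + (\text{higher powers of } y)$, the transition matrix to the monomial basis is triangular with nonzero diagonal, so the $\mathcal{M}_{n,s}$ are linearly independent and span $V_n$. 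Hence $\mathcal{W}_C(x,y)$ has a unique expansion $\mathcal{W}_C = \sum_{s=1}^{n+1} c_s \mathcal{M}_{n,s}$, and it remains to locate the support of $(c_s)$ and to read off $a_i = c_{d+i}$, which will give both the existence and the uniqueness asserted in \eqref{MDSExpression}.

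Comparing the coefficient of $x^n$ on both sides gives $\sum_{s=1}^{n+1} c_s = 1$ at once, since every weight enumerator has $x^n$-coefficient $1$; this will yield \eqref{CoefficientSum1} once the support is pinned down. For the low end I would argue by triangularity: matching the coefficient of $x^{n-w}y^w$ for $w = 1, 2, \ldots, d-1$, only $\mathcal{M}_{n,1}, \ldots, \mathcal{M}_{n,w}$ contribute, while $\mathcal{W}_C^{(w)} = 0$ because $d(C) = d$; an induction on $w$ then forces $c_1 = \cdots = c_{d-1} = 0$. For the high end I would pass to the dual code through the MacWilliams identity $\mathcal{W}_{C^{\perp}}(x,y) = q^{-k}\mathcal{W}_C(x+(q-1)y,\, x-y)$, using that the dual of an $[n, n-s+1, s]$ MDS-code is again MDS, with parameters $[n, s-1, n-s+2]$, so that the substitution $x \mapsto x+(q-1)y$, $y \mapsto x-y$ sends $\mathcal{M}_{n,s}$ to $q^{n-s+1}\mathcal{M}_{n,n+2-s}$. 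Transforming the expansion term by term shows that the coefficient of $\mathcal{M}_{n,s'}$ in $\mathcal{W}_{C^{\perp}}$ equals $q^{s'-1-k}c_{n+2-s'}$; applying the low-end vanishing already proved, now to $C^{\perp}$ whose minimum distance is $d^{\perp}$, gives $c_{n+2-s'} = 0$ for $s' < d^{\perp}$, i.e. $c_s = 0$ for $s > n+2-d^{\perp}$. Since $r = g + g^{\perp} = (n+1-k-d) + (k+1-d^{\perp}) = n+2-d-d^{\perp}$, this is exactly $s > d+r$, so the support of $(c_s)$ lies in $[d, d+r]$ and I set $a_i := c_{d+i}$. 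The hypothesis $C \not\subseteq H_i$ enters precisely here: it is equivalent to $d^{\perp} \ge 2$, which keeps $d+r \le n$ and confines the expansion to honest MDS enumerators.

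For \eqref{CoefficientFormula} I would exploit linearity. Expanding $\frac{P_C(t)}{(1-t)(1-qt)}[y(1-t)+xt]^n$ with $P_C(t)=\sum_i a_i t^i$ and extracting $\mathrm{Coeff}_{t^{n-d}}$ reduces, term by term, to the single-MDS identity
\[
\frac{\mathcal{M}_{n,s}(x,y) - x^n}{q-1} = \mathrm{Coeff}_{t^{n-s}}\left(\frac{[y(1-t)+xt]^n}{(1-t)(1-qt)}\right), \qquad s = d+i .
\]
Summing these with weights $a_i$ and invoking $\sum_i a_i = 1$ from \eqref{CoefficientSum1} recombines $\sum_i a_i \mathcal{M}_{n,d+i} = \mathcal{W}_C$ and pulls the $x^n$ term out correctly, producing \eqref{CoefficientFormula}. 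The single-MDS identity I would verify directly: the partial-fraction expansion $\frac{1}{(1-t)(1-qt)} = \frac{1}{q-1}\sum_{m\ge 0}(q^{m+1}-1)t^m$ together with $[y(1-t)+xt]^n = [\,y+(x-y)t\,]^n = \sum_j \binom{n}{j} y^{n-j}(x-y)^j t^j$ yields a closed form for the $t^{n-s}$-coefficient, and matching the coefficient of each monomial $x^{n-w}y^w$ against the classical MDS weight distribution $\mathcal{M}_{n,s}^{(w)} = \binom{n}{w}\sum_{j=0}^{w-s}(-1)^j\binom{w}{j}(q^{w-s-j+1}-1)$ closes the argument (the binomial identity $\binom{n}{n-w+l}\binom{n-w+l}{n-w}=\binom{n}{w}\binom{w}{l}$ being the only nontrivial step).

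The main obstacle is the high-end support bound: getting the MacWilliams transform of the MDS basis exactly right, with the correct power of $q$ and the reindexing $s \mapsto n+2-s$, and then identifying $n+2-d^{\perp}$ with $d+r$ through the genus relations for $C$ and $C^{\perp}$. Once that structural fact is in hand, the low-end induction, the normalization \eqref{CoefficientSum1}, and the single-MDS coefficient computation behind \eqref{CoefficientFormula} are all routine.
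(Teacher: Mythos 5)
The paper does not actually prove Proposition~\ref{LinearCode}: it is quoted from Duursma \cite{D2} without proof, so there is no in-paper argument to compare against. Judged on its own merits, your proof is correct and complete. The triangularity argument (each $\mathcal{M}_{n,s}-x^n$ has lowest $y$-degree exactly $s$ with leading coefficient $\binom{n}{s}(q-1)\neq 0$, plus $\mathcal{M}_{n,n+1}=x^n$) does give a basis of the degree-$n$ homogeneous polynomials; the low-end induction correctly kills $c_1,\dots,c_{d-1}$; the MacWilliams transform of the MDS basis, $\mathcal{M}_{n,s}(x+(q-1)y,\,x-y)=q^{n-s+1}\mathcal{M}_{n,n+2-s}(x,y)$, with the reindexing $s\mapsto n+2-s$ and the identification $n+2-d^{\perp}=d+r$ via $g=n+1-k-d$, $g^{\perp}=k+1-d^{\perp}$, correctly confines the support to $[d,d+r]$; and you correctly locate where the hypothesis $C\not\subseteq H_i$ enters (it is equivalent to $d^{\perp}\geq 2$, hence $d+r\leq n$). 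The single-MDS identity behind \eqref{CoefficientFormula} checks out: expanding $\frac{1}{(1-t)(1-qt)}=\sum_{m\geq 0}\frac{q^{m+1}-1}{q-1}t^{m}$ and $[y+(x-y)t]^{n}$, the coefficient of $x^{n-w}y^{w}$ becomes $\binom{n}{w}\sum_{l=0}^{w-s}(-1)^{l}\binom{w}{l}(q^{w-s+1-l}-1)$ via the identity $\binom{n}{n-w+l}\binom{n-w+l}{n-w}=\binom{n}{w}\binom{w}{l}$, matching \eqref{MDSWeightDistribution}, and summing against $a_i$ with $\sum_i a_i=1$ yields \eqref{CoefficientFormula}. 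This is essentially Duursma's original route, and it is consistent with the machinery the paper uses elsewhere (compare the computation culminating in \eqref{TheCore} inside the proof of Proposition~\ref{DExpressionOfW}, and the duality \eqref{DuursmaFunctionalEquation}). One small point worth tightening: since $\mathcal{M}_{n,s}$ must be read as the \emph{formal} polynomial given by \eqref{MDSWeightDistribution} (MDS codes of all parameters need not exist over a given $\mathbb{F}_q$), the claim that MacWilliams sends $\mathcal{M}_{n,s}$ to $q^{n-s+1}\mathcal{M}_{n,n+2-s}$ should not be justified by ``the dual of an MDS code is MDS'' alone; either verify it directly from \eqref{MDSWeightDistribution}, or note that both sides are polynomial in $q$ and agree for the infinitely many prime powers $q$ where (extended) Reed--Solomon codes realize the parameters, hence agree identically. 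With that one-line repair the argument is airtight.
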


 \begin{proposition}    \label{GeometricDzetaPolynomialAGCode}
 {\rm (Duursma's considerations from  \cite{D1})} Let $X / {\mathbb F}_q \subset {\mathbb P}^N ( \overline{{\mathbb F}_q})$ be a smooth irreducible curve of genus $g$, defined over ${\mathbb F}_q$ and
   $G_1, \ldots , G_h$ be a complete list of effective representatives of the linear equivalence classes of the divisors of $F = {\mathbb F}_q (X)$ of degree $2g-2 < m < n$.
   Assume that there exist $n$ different ${\mathbb F}_q$-rational points $P_1, \ldots , P_n \in X( {\mathbb F}_q)$, such that
    $D = P_1 + \ldots + P_n \in {\rm Div} (F)$ has support ${\rm Supp} (D) \cap {\rm Supp} (G_i) = \{ P_1, \ldots , P_n \} \cap {\rm Supp} (G_i) = \emptyset$
     for $\forall 1 \leq i \leq h$.
  If
  \[
  \mathcal{L} (G_i) = H^0 (X, \mathcal{O}_X ( [G_i]) := \{ f \in F^* \ \  \vert \ \  (f) + G_i \geq 0 \} \cup \{ 0 \}
  \]
  are the Riemann-Roch spaces of $G_i$,
  \[
  \mathcal{E}_D : \mathcal{L}( G_i) \longrightarrow {\mathbb F}_q ^n,
  \]
  \[
  \mathcal{E}_D (f) = ( f(P_1), \ldots , f(P_n)) \ \ \mbox{  for  } \ \ \forall f \in \mathcal{L} (G_i)
  \]
  are the evaluation maps at $D$ and $C_i := \mathcal{E}_D \mathcal{L} (G_i)$ are the corresponding Goppa codes with homogeneous weight enumerators
   $\mathcal{W} _{C_i} (x,y)$, then
   \begin{equation}    \label{RelationDzetaPolynomialsAGCodes}
   \sum\limits _{i=1} ^h \frac{\mathcal{W}_{C_i} (x,y) - x^n}{q-1} = {\rm Coeff} _{t^m} \left( \frac{L_F(t)}{(1-t)(1-qt)} [ y (1-t) + xt]^n \right)
   \end{equation}
for the $\zeta$-polynomial $L_F(t)$ of $F$.

In particular,
\begin{equation}    \label{MeanDzetaPolynomial}
\sum\limits _{i=1} ^h t^{g-g_i} P_{C_i}(t) = L_F(t)
\end{equation}
for the $\zeta$-polynomials $P_{C_i}(t)$ of $C_i = \mathcal{E}_D \mathcal{L} (G_i)$ and the Hasse-Weil polynomial $L_F(T)$ of the function field $F$.
 \end{proposition}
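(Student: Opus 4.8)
The plan is to prove the coefficient identity \eqref{RelationDzetaPolynomialsAGCodes} first, by exhibiting both of its sides as one and the same sum over the effective divisors of degree $m$, and then to extract \eqref{MeanDzetaPolynomial} from it by means of Proposition \ref{LinearCode} and the Riemann--Roch theorem. For the right-hand side I would start from the Euler product $Z_F(t)=\prod_P (1-t^{\deg P})^{-1}$ over the closed points $P$ of $F$, equivalent to the definition $Z_F(t)=\exp\bigl(\sum_s \tfrac{N_s(F)}{s}t^s\bigr)=\tfrac{L_F(t)}{(1-t)(1-qt)}$ recalled above. Splitting off the $n$ degree-one factors attached to $P_1,\dots,P_n$ gives $Z_F(t)=(1-t)^{-n}Z_F^{(D)}(t)$, where $Z_F^{(D)}(t)=\sum t^{\deg E}$ runs over the effective divisors $E$ with $\mathrm{Supp}(E)\cap\{P_1,\dots,P_n\}=\emptyset$. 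Writing $z(D)$ for the number of the points $P_1,\dots,P_n$ lying in $\mathrm{Supp}(D)$, I would then expand
\[
\frac{L_F(t)}{(1-t)(1-qt)}[y(1-t)+xt]^n
=\Bigl(y+x\tfrac{t}{1-t}\Bigr)^n Z_F^{(D)}(t)
=\sum_{D\ge 0} x^{z(D)}\,y^{\,n-z(D)}\,t^{\deg D},
\]
since multiplying out $\bigl(y+x\tfrac{t}{1-t}\bigr)^n$ and $Z_F^{(D)}(t)$ reconstructs every effective divisor $D=\sum_j a_jP_j+E$ exactly once. Reading off the coefficient of $t^m$ yields $\sum_{\deg D=m} x^{z(D)}y^{\,n-z(D)}$.

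For the left-hand side, each $0\ne f\in\cL(G_i)$ yields the codeword $\mathcal{E}_D(f)$ whose zero coordinates are precisely the $P_j\in\mathrm{Supp}((f)+G_i)$, the support of $G_i$ being disjoint from $D$; hence its monomial in $\mathcal{W}_{C_i}(x,y)$ is $x^{z(D_f)}y^{\,n-z(D_f)}$ with $D_f=(f)+G_i\ge 0$. The $q-1$ nonzero scalar multiples of $f$ produce the same effective divisor $D_f\sim G_i$, so $\tfrac{\mathcal{W}_{C_i}(x,y)-x^n}{q-1}=\sum_{D\sim G_i,\,D\ge 0} x^{z(D)}y^{\,n-z(D)}$. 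As $G_1,\dots,G_h$ exhaust the $h$ divisor classes of degree $m$, summation over $i$ reassembles $\sum_{\deg D=m} x^{z(D)}y^{\,n-z(D)}$, which coincides with the right-hand side and establishes \eqref{RelationDzetaPolynomialsAGCodes}.

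To derive \eqref{MeanDzetaPolynomial}, I would use that $\deg G_i=m>2g-2$ forces $l(K-G_i)=0$, so Riemann--Roch gives $k_i=l(G_i)=m-g+1$ for every $i$; thus the code genus equals $g_i=n+1-k_i-d_i=n-m+g-d_i$, i.e. $n-d_i=m-(g-g_i)$ with $g-g_i\ge 0$ since $d_i\ge n-m$. Feeding each $C_i$ into Proposition \ref{LinearCode} and shifting the extraction index by $g-g_i\ge 0$ turns $\tfrac{\mathcal{W}_{C_i}-x^n}{q-1}=\mathrm{Coeff}_{t^{\,n-d_i}}(\cdot)$ into $\mathrm{Coeff}_{t^{m}}\bigl(\tfrac{t^{g-g_i}P_{C_i}(t)}{(1-t)(1-qt)}[y(1-t)+xt]^n\bigr)$; summing over $i$ and comparing with \eqref{RelationDzetaPolynomialsAGCodes} shows that $R(t):=\sum_i t^{g-g_i}P_{C_i}(t)-L_F(t)$ is annihilated by the map $Q\mapsto \mathrm{Coeff}_{t^m}\bigl(\tfrac{Q(t)}{(1-t)(1-qt)}[y(1-t)+xt]^n\bigr)$.

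The main obstacle is the injectivity of this last map. I would prove it by observing that $u_j(x,y):=\mathrm{Coeff}_{t^j}\bigl(\tfrac{[y(1-t)+xt]^n}{(1-t)(1-qt)}\bigr)$ has, for $0\le j\le n$, lowest $y$-degree equal to $n-j$ with nonzero leading term $\binom{n}{j}x^jy^{\,n-j}$, so that $u_m,u_{m-1},u_{m-2},\dots$ form a triangular, hence linearly independent, system in the $y$-grading; consequently $\mathrm{Coeff}_{t^m}$ of $\tfrac{Q(t)}{(1-t)(1-qt)}[y(1-t)+xt]^n$, which equals $\sum_k b_k u_{m-k}$, recovers all coefficients $b_k$ of $Q=\sum_k b_kt^k$ provided $\deg Q\le m$. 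Since $C_i^\perp$ is again a Goppa code of the same type, attached to $K+D-G_i$ of degree in $(2g-2,n)$, the same Riemann--Roch count gives $g_i^\perp\le g$, whence $\deg\bigl(t^{g-g_i}P_{C_i}\bigr)=g+g_i^\perp\le 2g=\deg L_F$ and $\deg R\le 2g$. For $m\ge 2g$ this forces $R=0$ at once; the boundary value $m=2g-1$ leaves only the single top coefficient (degree $2g=m+1$) outside the triangular range, and I expect to dispose of it by matching the known leading coefficient $q^g$ of $L_F$ against the top coefficient of $\sum_i t^{g-g_i}P_{C_i}$ dictated by the functional equations of the zeta polynomials $P_{C_i}$. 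The two divisor-counting identities are routine once the Euler-product form of $Z_F$ is in hand; this degree and uniqueness bookkeeping is where the real work lies.
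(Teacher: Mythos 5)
Your argument for (\ref{RelationDzetaPolynomialsAGCodes}) is correct and is in substance the paper's own count, repackaged globally: the paper fixes a weight $s$, reduces (\ref{RelationDzetaPolynomialsAGCodes}) to the coefficient of $x^{n-s}y^{s}t^{m}$, and computes, for each $s$-subset $D_i$ of $\{P_1,\dots,P_n\}$, the number $e_m(i)$ of effective divisors of degree $m$ meeting ${\rm Supp}(D)$ exactly in ${\rm Supp}(D-D_i)$, using that $(1-t)^s\zeta_F(t)$ generates the effective divisors avoiding $D_i$; your single identity $Z_F(t)[y(1-t)+xt]^n=\sum_{E\geq 0}x^{z(E)}y^{\,n-z(E)}t^{\deg E}$, obtained by splitting the $n$ rational Euler factors off $Z_F(t)$, encodes all of these coefficientwise counts simultaneously and is, if anything, tidier. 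The real divergence is at (\ref{MeanDzetaPolynomial}): the paper declares it an ``immediate consequence'' of Proposition \ref{LinearCode} and (\ref{RelationDzetaPolynomialsAGCodes}) after the shift $n-d_i=m-(g-g_i)$, thereby passing silently from equality of the single coefficient ${\rm Coeff}_{t^m}(\cdot)$ to an equality of polynomials. You correctly isolate this as the missing step, and your triangularity lemma is right: $u_j(x,y)$ has lowest $y$-degree $n-j$ with coefficient $\binom{n}{j}\neq 0$ for $0\leq j\leq n$, so $Q\mapsto{\rm Coeff}_{t^m}\bigl(\frac{Q(t)}{(1-t)(1-qt)}[y(1-t)+xt]^n\bigr)$ is injective on polynomials of degree $\leq m$; together with $g_i^{\perp}\leq g$ (via $C_i^{\perp}=C_L(D,K_X+D-G_i)$ and $d_i^{\perp}\geq m-2g+2$) this settles every $m\geq 2g$ outright, which is more than the paper's proof actually establishes.

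The one step you leave as a sketch, $m=2g-1$, does close exactly as you anticipate, so let me confirm it. There $R(t)=\sum_i t^{g-g_i}P_{C_i}(t)-L_F(t)$ reduces to $R_{2g}t^{2g}$, and only the codes with $g_i^{\perp}=g$, equivalently $d_i^{\perp}=1$, contribute to ${\rm Coeff}_{t^{2g}}$, each (by (\ref{DuursmaFunctionalEquation})) with leading coefficient $q^{g}a_0^{(i)\perp}$, where $(q-1)\,n\,a_0^{(i)\perp}=\mathcal{W}_{C_i^{\perp}}^{(1)}$ and $\mathcal{W}_{C_i^{\perp}}^{(1)}/(q-1)$ counts the indices $j$ with $C_i\subset H_j$, i.e., with $G_i\sim K_X+P_j$; since each class $[K_X+P_j]$ has degree $2g-1=m$ and hence coincides with exactly one $[G_i]$, summing over $i$ gives $\sum_i a_0^{(i)\perp}=n/n=1$, matching the leading coefficient $q^{g}$ of $L_F(t)$. (Alternatively, for $n\geq 2g+1$ one may apply the already-settled case $m^{\perp}=2g-2+n-m\geq 2g$ to the dual family and transport the conclusion through the functional equations; only $n=2g$ forces the direct computation.) One caveat, which afflicts the paper's statement equally and is a defect of the statement rather than of your argument: precisely in this boundary situation Riemann--Roch gives $L(G_i-P_j)=L(G_i)$ whenever $G_i\sim K_X+P_j$, so such $C_i$ lie in the coordinate hyperplane $H_j$ and Proposition \ref{LinearCode} does not literally define $P_{C_i}$; for $m=2g-1$ the identity (\ref{MeanDzetaPolynomial}) therefore presupposes an extension of the zeta polynomial to degenerate codes, a point neither you nor the paper addresses.
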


  \begin{proof}

   Note that (\ref{RelationDzetaPolynomialsAGCodes}) is an equality of homogeneous polynomials of $x$ and $y$ of degree $n$, whose monomials are of degree $s \geq 1$ with respect to $y$.
 Therefore (\ref{RelationDzetaPolynomialsAGCodes}) is equivalent to
 \begin{equation}   \label{CoeffLFSumWC}
 \begin{split}
 \frac{\sum\limits _{i=1} ^{h(F)} \mathcal{W}_{C_i} ^{(s)}}{q-1} = {\rm Coeff} _{x^{n-s} y^{s} t^{m}} ( \zeta _F(t) [ y (1-t) + xt] ^n ) =  \\
 {\rm Coeff} _{t^{m}} \left( \binom{n}{s} t^{n-s} (1-t)^s \zeta _F(t) \right) =  \\
 \binom{n}{s} {\rm Coeff} _{t^{s-n+m}} \left( (1-t) ^s \zeta _F(t) \right)
 \end{split}
  \end{equation}
 for $\forall s \in {\mathbb N}$.
 Note that $C_i$ are of minimum distance $d(C_i) \geq n-m$, so that $\mathcal{W}_{C_i} ^{(s)} =0$ for $1 \leq s < n-m$.
 On the other hand,
 \[
 (1-t) ^s \zeta _F(t) = \frac{(1-t)^{s-1} L_F(t)}{1-qt}
 \]
 has no pole at $t=0$, so that ${\rm Coeff} _{t^{s-n+m}} \left( (1-t)^s \zeta _F(t) \right) =0$ for $s-n+m <0$, $s \in {\mathbb N}$.
 That is why it suffices to verify (\ref{CoeffLFSumWC}) for $s \geq n-m$, $s \in {\mathbb N}$.

 Note that the number of the codewords $c = (f(P_1), \ldots , f(P_n)) \in C_i$, $f \in L(G_i)$ of weight $s$ equals the number of the rational functions
  $f \in L(G_i) \setminus \{ 0 \}$, vanishing at $n-s$ of the points $P_1, \ldots , P_n$.
Bearing in mind that the projective space ${\mathbb P} (L(G_i)) = {\mathbb P}^{m-g} ( {\mathbb F}_q)$ parameterizes the effective divisors, linearly equivalent to $G_i$ and two rational functions $f, f'\in F \setminus \{ 0 \}$ have  one and a same divisor exactly when they are on one and a same ${\mathbb F}_q^*$-orbit, $f'\in {\mathbb F}_q^* f$, one concludes that $\frac{\mathcal{W}_{C_i}^{(s)}}{q-1}$ is the number of the effective divisors $E = (f) + G_i$, which are linearly equivalent to $G_i$ with $| {\rm Supp} (E) \cap {\rm Supp} (D) | = n-s$.
Thus,
\[
e_{m,s} := \frac{\sum\limits _{i=1} ^{h(F)} \mathcal{W}_{C_i} ^{(s)}}{q-1}
\]
equals the number of the effective divisors $E \in {\rm Div} (F) ^{\geq 0}$ of degree $\deg E = m$ with $|{\rm Supp} (E) \cap {\rm Supp} (D)| = n-s$.
For any $s$-tuple of indices $i = \{ i_1, \ldots , i_s \}$, $1 \leq i_1 < \ldots < i_s \leq n$ let $D_i := P_{i_1} + \ldots + P_{i_s}$ and $e_m (i)$ be the number of the effective divisors $E \in {\rm Div} (F) ^{\geq 0}$ of degree $\deg E =m$ with ${\rm Supp} (E) \cap {\rm Supp} (D) = {\rm Supp} (D - D_i)$.
Then $e_{m,s} = \sum\limits _{i} e_m (i)$ and it suffices to show that $ e_m (i) = {\rm Coeff} _{t^{s-n+m}} \left( (1-t) ^s \zeta _F(t) \right)$ for any  $i$, in order to justify (\ref{CoeffLFSumWC}) and (\ref{RelationDzetaPolynomialsAGCodes}).

 To this end, observe that $E \in {\rm Div} (F) ^{\geq 0}$ is an effective divisor of degree $\deg E =m$ with
 ${\rm Supp} (E) \cap {\rm Supp} (D) = {\rm Supp} (D-D_i)$ if and only if the difference $E_1 := E - (D - D_i) \in {\rm Div} (F) ^{\geq 0}$ is an effective divisor of degree $\deg E_1 = m-n+s$ with support  ${\rm Supp} (E_1) \cap {\rm Supp} (D_i) = \emptyset$.
Now, $e_m (i)$ equals the number of the effective divisors $E_1 \in {\rm Div} (F) ^{\geq 0}$ of degree $\deg E_1 = m-n+s$ with
 ${\rm Supp} (E_1) \cap {\rm Supp} (D_i) = \emptyset$.
Recall that the Hasse-Weil $\zeta$-function
\[
\zeta _F(t) = \prod\limits _{\nu \in \mathcal{P}} \frac{1}{1 - t^{\deg \nu}} = \sum\limits _{i=0} ^{\infty} \mathcal{A}_i t^{i}
\]
is the generating function for the number $\mathcal{A}_i$ of the effective divisors of $F$ of degree $i$.
Bearing in mind that $D_i = \nu _{i_1} + \ldots + \nu _{i_s}$ is a sum of $s$ different places $i_j$ of degree $\deg \nu _{i_j} =1$, one observes that
$(1-t) ^s \zeta _F(t)$ is the generating function for the number of the effective divisors of $F$ of degree $i$, whose support is disjoint with
${\rm Supp} (D_i)$.
In other words, $e_m (i) = {\rm Coeff} _{t^{m-n+s}} \left( (1-t)^s \zeta _F(t) \right)$.

The equality (\ref{MeanDzetaPolynomial}) is an immediate consequence of Proposition \ref{LinearCode},  (\ref{RelationDzetaPolynomialsAGCodes}) and the fact that $C_i = \mathcal{E}_D \mathcal{L} (G_i)$ are of dimension $\dim _{{\mathbb F}_q} C_i = l(G_i) = m-g+1$,  minimum distance  $d_i \geq n-m$ and, therefore, of genus \[
g_i = n+1 - \dim _{{\mathbb F}_q} C_i - d_i = n-m - d_i + g \leq  g.
\]

\end{proof}

Proposition \ref{GeometricDzetaPolynomialAGCode} motivates Duursma to refer to $P_C(t)$ as to the zeta polynomial of an arbitrary linear code
$C \subset {\mathbb F}_q^n$.
He establishes that  $P_C(t)$ and $\mathcal{W}_C(x,y)$  are in a bijective correspondence and Mac Williams identities, relating the weight distributions
 $\{ \mathcal{W}_C^{(w)} \} _{w=d} ^n$, $\{ \mathcal{W}_{C^{\perp}} ^{(w)} \} _{w= d^{\perp}} ^n$ of a pair $(C, C^{\perp})$ of mutually dual linear codes are equivalent to the functional equation
\begin{equation}     \label{DuursmaFunctionalEquation}
P_{C^{\perp}} (t) = P_C \left( \frac{1}{qt} \right) q^g t^{g + g^{\perp}}
\end{equation}
 for the corresponding zeta polynomials $P_C(t), P_{C^{\perp}} (t)$.

 In \cite{D1} and \cite{D3} Duursma observes the existence of a polynomial $D_C(t) = \sum\limits _{i=0} ^{r-2} c_i t^{i} \in {\mathbb Q}[t]$, defined by the identity
 \[
 P_C(t) = (1-t)(1-qt) D_C(t) + t^g
\]
of polynomials in  $t$, but does not make use of $D_C(t)$ for the study of the homogeneous weight enumerator $\mathcal{W}_C(x,y)$ of $C$.
  He mentions in \cite{D3}  that the analogue $D_F(t)$ of $D_C(t)$ for a function field $F$ of one variable accounts for the contribution of the special divisors of $F$ to the zeta function $Z_F(t)$.
  From now on, we refer to $D_C(t)$ as to Duursma's reduced polynomial of $C$.

   The present note provides an explicit bijective correspondence between the weight distribution $\{ \mathcal{W}_C^{(w)} \} _{w=d} ^{n}$ of an arbitrary linear code $C \subset {\mathbb F}_q^n$  and the coefficients $\{ c_i \} _{i=0} ^{r-2}$ of its Duursma's reduced polynomial
    $D_C(t) = \sum\limits _{i=0} ^{r-2} c_i t^{i}$  (cf. Proposition \ref{DExpressionOfW}).

The classical Hasse-Weil Theorem establishes that all the roots of the Hasse-Weil polynomial $L_F(t) \in {\mathbb Z}[t]$ of the function field
$F = {\mathbb F}_q(X)$ of a curve $X$ of genus $g$ over ${\mathbb F}_q$ are on the circle
$S \left( \frac{1}{\sqrt{q}} \right) : \left \{ z \in {\mathbb C} \ \ \Big \vert \ \ |z| = \frac{1}{\sqrt{q}} \right \}$
(cf. Theorem 4.2.3 form \cite{NX}).
Duursma says that a linear code $C \subset {\mathbb F}_q^n$ satisfies the Riemann Hypothesis Analogue  if all the roots of its zeta polynomial $P_C(t) = \sum\limits _{i=0} ^r a_i t^{i} \in {\mathbb Q}[t]$ are on the circle $S \left( \frac{1}{\sqrt{q}} \right)$.
Let $C$ be an ${\mathbb F}_q$-linear code of dimension $k$ and minimum distance $d$, which satisfies the Riemann Hypothesis Analogue.
    Proposition \ref{RHAImpliesFSD} shows that  $C$ is formally self-dual, while   Corollary \ref{BoundOnFieldCardinality} provides an  explicit upper bound on  the cardinality $q$ of the basic field, depending on $k$ and $d$.
 Let us recall that $C$ is formally self-dual if it has the same weight distribution $\mathcal{W}_C ^{(w)} = \mathcal{W}_{C^{\perp}} ^{(w)}$, $\forall 0 \leq w \leq n$ as its dual code $C^{\perp} \subset {\mathbb F}_q^n$.
In the light of Duursma's results and our Proposition \ref{DExpressionOfW}, the formal self-duality of $C$ turns to be equivalent to the functional equation
$
P_C(t) = P_C \left( \frac{1}{qt} \right) q^g t^{2g}
$
for $P_C(t)$ and to the functional equation
$
D_C(t) = D_C \left( \frac{1}{qt} \right) q^{g-1} t^{2g-2}
$
for $D_C(t)$.
 Proposition \ref{FSD} from the present note  expresses explicitly the homogeneous weight enumerator $\mathcal{W}_C(x,y)$ of a formally self-dual code
 $C \subset {\mathbb F}_q^n$ by the lowest  half of the coefficients of $D_C(t)$ or by the numbers $\mathcal{W}_C^{(d)}, \ldots , \mathcal{W}_C^{(k)}$ of the codewords $c \in C$, whose weights  are between the minimum distance $d$ of $C$ and the dimension $k$.

In  \cite{DL} Dodunekov and Landgev introduce the near-MDS code $C \subset {\mathbb F}_q^n$ as the ones with quadratic zeta polynomial $P_C(t)$.
Kim and Hyun's article \cite{KH} provides a necessary and sufficient condition for a near-MDS code to satisfy the Riemann Hypothesis Analogue.
Note that the zeta polynomials $P_C(t)$ and Duursma's reduced polynomials $D_C(t)$  of formally self-dual codes $C \subset {\mathbb F}_q^n$ are of even degree.
Our Proposition  \ref{RHAForQuadraticD}  is a necessary and sufficient condition for a formally self-dual code $C \subset {\mathbb F}_q^n$   with zeta polynomial $P_C(T)$ of   $\deg P_C(t) = 4$ to be subject to the Riemann Hypothesis Analogue.
Let $S_{\nu}$, $\nu \in {\mathbb N}$ be the uniquely determined logarithmic coefficients of $P_{\mathcal{C}}(t)$, defined by the equality of formal power series
$\log P_{\mathcal{C}}(t) = \sum\limits _{\nu =1} ^{\infty} S_{\nu} \frac{t^{\nu}}{\nu} \in {\mathbb C} [[t]]$.
Adapting Bombieri's proof of the Hasse-Weil Theorem, \cite{KMT} shows that a linear code $\mathcal{C}$ satisfies the Riemann Hypothesis Analogue exactly when the sequence $\{ S_{\nu} q^{ - \frac{\nu}{2}} \} _{\nu =1} ^{\infty} \subset {\mathbb C}$ is absolutely bounded.

The last, third section is devoted to Duursma's reduced polynomial $D_F(t)$ of the function field $F = {\mathbb F}_q(X)$ of a curve
 $X / {\mathbb F}_q \subset {\mathbb P}^N ( \overline{{\mathbb F}_q})$ of genus $g$ over ${\mathbb F}_q$.
It establishes that $D_F(t) \in {\mathbb Z}[t]$ is determined uniquely by its lowest $g$ coefficients, which equal the numbers $\mathcal{A}_i$ of the effective divisors of $F$ of degree $0 \leq i \leq g-1$.
Our Proposition  \ref{HasseWeilDecomposition} shows that the zeta function
\[
\frac{D_F(t)}{(1-t)(1-qt)} = \sum\limits _{i=0} ^{\infty} \mathcal{B}_i t^{i},
\]
associated with $D_F(t)$ has the properties of a generating function for the numbers $\mathcal{B}_i$ of the effective divisors of degree $i \geq 0$ of a virtual function field of genus $g-1$ over ${\mathbb F}_q$.
There arises the following

{\bf Open Problem:}  To characterize the function fields $F = {\mathbb F}_q(X)$   of curves $X / {\mathbb F}_q \subset {\mathbb P}^N ( \overline{{\mathbb F}_q})$ of genus $g$ over ${\mathbb F}_q$, for which there are curves $Y / {\mathbb F}_q \subset {\mathbb P}^M ( \overline{{\mathbb F}_q})$ of genus $g-1$, defined over ${\mathbb F}_q$ with Hasse-Weil zeta function
\[
Z_{{\mathbb F}_q(Y)} (t) = \frac{D_F(t)}{(1-t)(1-qt)}.
\]

\section{The homogeneous weight enumerator of an arbitrary code}

\begin{proposition}      \label{DExpressionOfW}
Let $C \subset {\mathbb F}_q^n $ be a  linear code of dimension $k = \dim _{{\mathbb F}_q} C$, minimum distance $d$ and genus
 $g = n+1 -k-d \geq 1$, whose dual $C^{\perp} \subset {\mathbb F}_q^n$ is of minimum distance $d^{\perp}$ and genus
$g^{\perp} = k +1 - d^{\perp} \geq 1$.
If
\[
D_C(t) = \sum\limits _{i=0} ^{g + g^{\perp} -2} c_i t^{i} \in {\mathbb Q}[t]
 \]
 is Duursma's reduced polynomial of $C$ and $\mathcal{M}_{n, n+1 -k} (x,y)$ is the homogeneous weight enumerator of  an MDS-code of length $n$, dimension $k$  and minimum distance $ n+1 -k$, then the homogeneous weight enumerator of $C$ is
\begin{equation}    \label{WByD}
\mathcal{W}_C(x,y) = \mathcal{M}_{n, n+1-k} (x,y)  + (q-1) \sum\limits _{i=0} ^{g + g^{\perp} -2} c_i \binom{n}{d+i} (x-y)^{n-d-i} y^{d+i}.
\end{equation}
More precisely, Duursma's reduced polynomial $D_C(t) = \sum\limits _{i=0} ^{g + g^{\perp} -2} c_i t^{i}$ determines uni\-que\-ly the weight distribution of $C$, according to
\begin{equation}    \label{WeightsByD1}
\mathcal{W}_C^{(w)} = (q-1)  \binom{n}{w} \sum\limits _{i=0} ^{w-d} (-1)^{w-d-i} \binom{w}{d+i} c_i \ \ \mbox{   for  } \ \ d \leq w \leq d + g -1,
\end{equation}
\begin{equation}    \label{WeightsByD2}
\begin{split}
\mathcal{W}_C^{(w)} = & (q-1) \binom{n}{w} \sum\limits _{i=0} ^{\min (w-d, n-d-d^{\perp})} (-1)^{w-d-i} \binom{w}{d+i} c_i     \\
& + \binom{n}{w} \sum\limits _{j=0} ^{w-n-1+k} (-1)^{j} \binom{w}{j} (q^{w-n+k-j} -1) \ \ \mbox{   for  } \ \  d+g \leq w \leq n.
\end{split}
\end{equation}
Conversely, for $\forall 0 \leq i \leq g + g^{\perp} -2$ the numbers $\mathcal{W}_C ^{(d)}, \ldots , \mathcal{W}_C^{(d+i)}$ determine uniquely the coefficient $c_i$ of Duursma's reduced polynomial $D_C(t) = \sum\limits _{i=0} ^{g + g^{\perp} -2} c_i t^{i}$  by
\begin{equation}    \label{DCByW1}
c_i = (q-1)^{-1} \binom{n}{d+i} ^{-1} \sum\limits _{w=d} ^{d+i} \binom{n-w}{n -d-i} \mathcal{W}_C ^{(w)}
\end{equation}
for $0 \leq i \leq g-1$,
\begin{equation}    \label{DCByW2}
\begin{split}
c_i = (q-1)^{-1} \binom{n}{d+i} ^{-1} \left \{  \sum\limits _{w=d} ^{d+g-1} \binom{n-w}{n-d-i} \mathcal{W}_C ^{(w)}  \right. \\
 \left.  + \sum\limits _{w=d+g} ^{d+i} \binom{n-w}{n-d-i}
   \left[ \mathcal{W}_C ^{(w)} - \binom{n}{w} \sum\limits _{j=0} ^{w-n-1+k} (-1)^{j} \binom{w}{j} (q^{w-n+k-j} -1) \right] \right \}
\end{split}
\end{equation}
for $g \leq i \leq g + g^{\perp} -2$.

In particular,
\[
(q-1) \binom{n}{d+i} c_i \in {\mathbb Z}
\]
are integers for all $0 \leq i \leq g + g^{\perp} -2$.

 The aforementioned formulae imply that  $\mathcal{W}_C^{(d)}, \ldots , \mathcal{W}_C^{(d + g + g^{\perp} -2)}$ determine uniquely the homogeneous weight enu\-me\-ra\-tor $\mathcal{W}_C(x,y)$ of $C$ by the formula
\begin{equation}    \label{HWEByFirstCoeff}
\mathcal{W}_C(x,y) = \sum\limits _{w=d} ^{d + g + g^{\perp} -2} \mathcal{W}_C ^{(w)} \lambda  _w(x,y) + \Lambda (x,y),
\end{equation}
with explicit polynomials
\begin{equation}    \label{PhiFactor}
\lambda _w (x,y) := \sum\limits _{s=w} ^{d + g + g^{\perp} -2} \binom{n-w}{n-s} (x-y) ^{n-s} y^s  \ \ \mbox{  for } \ \ d \leq w \leq d + g + g^{\perp} -2
\end{equation}
and
\begin{equation}     \label{ExplicitTerm}
\Lambda (x,y) := \mathcal{M}_{n, n+1-k} (x,y) - \sum\limits _{w=d+g} ^{d+g + g^{\perp} -2} \mathcal{M}_{n, n+1-k} ^{(w)} \lambda _w (x,y).
\end{equation}
\end{proposition}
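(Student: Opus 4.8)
The plan is to start from Duursma's coefficient formula (\ref{CoefficientFormula}) of Proposition \ref{LinearCode} and substitute the defining relation $P_C(t) = (1-t)(1-qt) D_C(t) + t^g$ of Duursma's reduced polynomial, which splits the generating function as
\[
\frac{P_C(t)}{(1-t)(1-qt)} = D_C(t) + \frac{t^g}{(1-t)(1-qt)}.
\]
Then (\ref{CoefficientFormula}) becomes a sum of two coefficient extractions. For the polynomial part I write $y(1-t)+xt = y + (x-y)t$, so that by the binomial theorem the coefficient of $t^{n-d-i}$ in $[y+(x-y)t]^n$ is $\binom{n}{d+i}(x-y)^{n-d-i} y^{d+i}$; summing against $c_i$ reproduces the $D_C$-contribution of (\ref{WByD}). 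For the remaining part, since $g = n+1-k-d$ gives $n-d-g = k-1$, the coefficient of $t^{n-d}$ in $\frac{t^g}{(1-t)(1-qt)}[y+(x-y)t]^n$ equals the coefficient of $t^{k-1}$ in $\frac{1}{(1-t)(1-qt)}[y+(x-y)t]^n$, which is exactly $\frac{\mathcal{M}_{n,n+1-k}(x,y) - x^n}{q-1}$ by applying (\ref{CoefficientFormula}) to the MDS-code (whose zeta polynomial is identically $1$, as its genus and that of its dual vanish). Recombining the two parts yields (\ref{WByD}).

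To obtain the weight distribution I extract the coefficient of $x^{n-w} y^w$ from (\ref{WByD}). Expanding $(x-y)^{n-d-i} y^{d+i}$ produces the coefficient $(-1)^{w-d-i}\binom{n-d-i}{w-d-i}$, and the subset-of-a-subset identity $\binom{n}{d+i}\binom{n-d-i}{w-d-i} = \binom{n}{w}\binom{w}{d+i}$ converts the $D_C$-term into the alternating sum of (\ref{WeightsByD1}) and (\ref{WeightsByD2}). The MDS-term contributes $\mathcal{M}_{n,n+1-k}^{(w)}$, which vanishes for $w < d+g$ (the MDS minimum distance being $n+1-k = d+g$) and otherwise equals the classical MDS weight formula $\binom{n}{w}\sum_{j=0}^{w-n-1+k}(-1)^j\binom{w}{j}(q^{w-n+k-j}-1)$. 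This forces the case split between (\ref{WeightsByD1}) and (\ref{WeightsByD2}), with the upper summation limit $\min(w-d, n-d-d^\perp)$ recording that $c_i = 0$ for $i > g + g^\perp - 2 = n-d-d^\perp$.

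For the inversion (\ref{DCByW1})--(\ref{DCByW2}) the clean device is the substitution $x \mapsto y+z$ in (\ref{WByD}): the right-hand $D_C$-sum collapses to the \emph{diagonal} expression $(q-1)\sum_i c_i \binom{n}{d+i} z^{n-d-i} y^{d+i}$, so that $(q-1)\binom{n}{d+i} c_i$ is read off as the coefficient of $z^{n-d-i} y^{d+i}$. On the left, expanding $(y+z)^{n-w}$ shows this coefficient equals $\sum_{w=d}^{d+i}\binom{n-w}{n-d-i}\bigl(\mathcal{W}_C^{(w)} - \mathcal{M}_{n,n+1-k}^{(w)}\bigr)$, which gives (\ref{DCByW1}) when $d+i < d+g$ and, after inserting the MDS formula, (\ref{DCByW2}) when $d+i \geq d+g$. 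Since $\mathcal{W}_C^{(w)}$ and $\mathcal{M}_{n,n+1-k}^{(w)}$ are non-negative integers and the binomial coefficients are integral, the very same identity yields $(q-1)\binom{n}{d+i} c_i \in \mathbb{Z}$ at once.

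Finally, (\ref{HWEByFirstCoeff}) follows by substituting the inverted value of $(q-1)\binom{n}{d+i} c_i$ back into (\ref{WByD}) and interchanging the order of summation over $i$ (equivalently $s = d+i$) and $w$; collecting the $(x-y)^{n-s} y^s$ terms for each fixed $w$ assembles the polynomials $\lambda_w(x,y)$ of (\ref{PhiFactor}), while separating out the pure MDS part gives $\Lambda(x,y)$ of (\ref{ExplicitTerm}). I expect the only genuinely delicate point to be the bookkeeping of the summation ranges — tracking simultaneously the threshold $w = d+g$ where the MDS contribution switches on and the threshold $i = g + g^\perp - 2$ where $D_C$ terminates — together with the verification that the $\frac{t^g}{(1-t)(1-qt)}$-part truly reproduces the MDS enumerator; everything else reduces to the binomial identity above and routine reindexing.
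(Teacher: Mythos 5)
Your proof is correct, and its first stage takes a genuinely different route from the paper's; the later stages coincide with the paper's argument. To establish (\ref{WByD}) the paper works entirely on the MDS side: from $P_C(t)-t^g=(1-t)(1-qt)D_C(t)$ it converts coefficients via $b_i=c_i-(q+1)c_{i-1}+qc_{i-2}$ and then verifies, by a long binomial computation on the explicit MDS weight distribution (\ref{MDSWeightDistribution}), the second-difference identity $\mathcal{M}_{n,d+j}(x,y)-(q+1)\mathcal{M}_{n,d+j+1}(x,y)+q\mathcal{M}_{n,d+j+2}(x,y)=(q-1)\binom{n}{d+j}(x-y)^{n-d-j}y^{d+j}$, i.e.\ (\ref{TheCore}). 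You instead split the generating function in (\ref{CoefficientFormula}) as $D_C(t)+\frac{t^g}{(1-t)(1-qt)}$ and read off the two coefficient extractions directly, using $y(1-t)+xt=y+(x-y)t$; this replaces the paper's binomial algebra by the single observation that the $t^g$-part is the coefficient formula with $P\equiv 1$ shifted by $n-d-g=k-1$, which reproduces the MDS enumerator. Your route is shorter and makes transparent why the monomials $(x-y)^{n-d-i}y^{d+i}$ appear; what the paper's computation buys is independence from the correspondence $t^i\leftrightarrow\mathcal{M}_{n,d+i}$ implicit in Proposition \ref{LinearCode} --- the one point you should phrase carefully, since an actual MDS code of length $n$ and dimension $k$ need not exist over ${\mathbb F}_q$, so ``applying (\ref{CoefficientFormula}) to the MDS-code'' must be read as the purely formal identity $\frac{\mathcal{M}_{n,d+i}(x,y)-x^n}{q-1}={\rm Coeff}_{t^{n-d}}\left(\frac{t^{i}}{(1-t)(1-qt)}[y(1-t)+xt]^n\right)$ for the formal enumerator defined by (\ref{MDSWeightDistribution}), which is exactly the compatibility of (\ref{MDSExpression}) with (\ref{CoefficientFormula}) established by Duursma; with that reading, your step is sound. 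The remaining stages of your argument --- extraction of the $x^{n-w}y^w$-coefficient via $\binom{n}{d+i}\binom{n-d-i}{w-d-i}=\binom{n}{w}\binom{w}{d+i}$ for (\ref{WeightsByD1})--(\ref{WeightsByD2}), including the correct identification of the cutoff $\min(w-d,\,n-d-d^{\perp})$ with $\deg D_C=g+g^{\perp}-2$; the substitution $x\mapsto y+z$ and comparison of $z^{n-d-i}y^{d+i}$-coefficients for (\ref{DCByW1})--(\ref{DCByW2}) together with the integrality of $(q-1)\binom{n}{d+i}c_i$; and the back-substitution with interchange of summation yielding (\ref{HWEByFirstCoeff}) with (\ref{PhiFactor}) and (\ref{ExplicitTerm}) --- follow the paper's proof step for step.
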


\begin{proof}

In the case of $g=0$, note that $C$ is an MDS-code and $\mathcal{W}_C (x,y) = \mathcal{M}_{n,n+1-k}(x,y)$.
Form now on, we assume that $g >0$ and put $r:= g + g^{\perp}$.
Making use of $d + g = n+1-k$, let us express
\[
\mathcal{W}_C(x,y) = \mathcal{M}_{n, d+g} (x,y) + \sum\limits _{i=0} ^{r} b_i \mathcal{M}_{n, d+i} (x,y)
\]
by  some rational numbers $b_i \in {\mathbb Q}$.
Then the seta polynomial $P_C(t) = t^{g} + \sum\limits _{i=0} ^r b_i t^{i}$    and
Duursma's reduced polynomial $D_C(t) = \sum\limits _{i=0} ^{r-2} c_i t^{i}$ of $C$ are related by  the equality
\begin{equation}    \label{DefD}
P_C(t) - t^g = (1-t)(1-qt) D_C(t).
\end{equation}
Let us introduce $c_{-2} = c_{-1} = c_{r-1} = c_r =0$ and compare the coefficients of $t^{i}$   from the left and right hand side of (\ref{DefD}), in order to obtain
\[
b_i = c_i -(q+1) c_{i-1} + qc_{i-2} \ \ \mbox{ for } \ \ \forall  0 \leq i \leq r.
\]
Therefore
 \begin{align*}
\mathcal{W}_C(x,y)   =  \mathcal{M}_{n, d+g} (x,y) + \sum\limits _{i=0} ^r c_i \mathcal{M}_{n, d+i} (x,y) \\
  - (q+1) \sum\limits _{i=0} ^r c_{i-1} \mathcal{M}_{n, d+i} (x,y) + q \sum\limits _{i=0} ^r c_{i-2} \mathcal{M}_{n, d+i} (x,y).
\end{align*}
Setting $j = i-1$, respectively, $j = i-2$ in the last two sums, one obtains
\begin{align*}
\mathcal{W}_C(x,y) = \mathcal{M}_{n, d+g} (x,y) + \sum\limits _{i=0} ^r c_i \mathcal{M}_{n, d+i} (x,y)  \\
 - (q+1) \sum\limits _{j=-1} ^{r-1} c_j \mathcal{M}_{n, d+j+1} (x,y) +q \sum\limits _{j=-2} ^{r-2} c_j \mathcal{M}_{n, d+j +2} (x,y),
 \end{align*}
whereas
\begin{equation}   \label{TheEqualityUnder Study}
\begin{split}
\mathcal{W}_C (x,y) = \mathcal{M}_{n, d+g} (x,y) \\
 + \sum\limits _{j=0} ^{r-2} c_j [ \mathcal{M}_{n, d+j} (x,y) -   (q+1) \mathcal{M}_{n, d+j+1} (x,y) + q \mathcal{M}_{n, d+j+2} (x,y) ].
\end{split}
\end{equation}
Let us put
\[
\mathcal{W}_{n, d+j} (x,y) := \mathcal{M}_{n, d+j} (x,y) - (q+1) \mathcal{M}_{n, d+j+1} (x,y) + q \mathcal{M}_{n, d+j+2} (x,y)
\]
and recall that the homogeneous weight enumerator of an MDS-code of length $n$ and minimum distance $d+j$  is
\[
\mathcal{M}_{n, d+j} (x,y) = x^n + \sum\limits _{w = d+j} ^n \mathcal{M}_{n, d+j}  ^{(w)} x^{n-w} y^w
\]
with
\begin{equation}   \label{MDSWeightDistribution}
\mathcal{M}_{n, d+j} ^{(w)} = \binom{n}{w} \sum\limits _{i=0} ^{w - d - j} (-1)^{i} \binom{w}{i} ( q^{w+1 -d-j-i} -1).
\end{equation}
Therefore
  \begin{align*}
 \mathcal{W}_{n, d+j} (x,y) = \mathcal{M}_{n, d+j} ^{(d+j)} x^{n -d-j} y^{d+j} +
 [ \mathcal{M}_{n, d+j} ^{(d+j+1)} - (q+1) \mathcal{M}_{n, d+j+1} ^{(d+j+1)} ] x^{n-d-j-1} y^{d+j+1} \\
 + \sum\limits _{w=d+j+2} ^n [ \mathcal{M}_{n, d+j} ^{(w)} - (q+1) \mathcal{M}_{n, d+j+1} ^{(w)} + q \mathcal{M}_{n, d+j +2} ^{(w)} ] x^{n-w} y^w.
 \end{align*}
Making use of the weight distribution (\ref{MDSWeightDistribution}) of an MDS-code and introducing
\[
\mathcal{W}_{n, d+j} ^{(w)} := \mathcal{M}_{n, d+j} ^{(w)} - (q+1) \mathcal{M}_{n, d+j+1} ^{(w)} + q \mathcal{M}_{n,d+j+2} ^{(w)} \ \ \mbox{  for  } \ \
d+j+2 \leq w \leq n,
\]
one expresses
\begin{align*}
\mathcal{W}_{n, d+j} (x,y) = \binom{n}{d+j} (q-1) x^{n-d-j} y^{d+j} \\
 - \binom{n}{d+j+1} (q-1) (d+j+1) x^{n-d-j-1} y^{d+j+1} + \sum\limits _{w= d+j +2} ^n \mathcal{W}_{n, d+j} ^{(w)} x^{n-w} y^w.
\end{align*}
For any $d + j +2 \leq w \leq n$ one has
 \[
\mathcal{W}_{n, d+j} ^{(w)} = \binom{n}{w} \binom{w}{d+j} (q-1) (-1) ^{w-d-j}.
\]
Making use of
\[
\binom{n}{w} \binom{w}{d+j} = \binom{n-d-j}{w-d-j} \binom{n}{d+j},
\]
one obtains
\begin{align*}
\mathcal{W}_{n, d+j} (x,y) =  \\
 \binom{n}{d+j} (q-1) x^{n-d-j} y^{d+j} -
\binom{n}{d+j+1} (q-1) (d+j+1) x^{n-d-j-1} y^{d+j+1} +  \\
+ \sum\limits _{w=d+j+2} ^n \binom{n}{d+j} \binom{n-d-j}{w-d-j} (q-1) (-1)^{w-d-j} x^{n-w} y^w.
\end{align*}
 Bearing in mind that
\[
(d+j+1) \binom{n}{d+j+1} = (n-d-j) \binom{n}{d+j},
\]
one derives that
\begin{align*}
\mathcal{W}_{n, d+j} (x,y) = \binom{n}{d+j} (q-1) \left[ x^{n-d-j} y^{d+j} - (n-d-j) x^{n-d-j-1} y^{d+j+1} + \right.  \\
\left. +\sum\limits _{w = d+j+2} ^n (-1)^{w-d-j} \binom{n-d-j}{w-d-j} x^{n-w} y^w \right].
\end{align*}
Introducing $s:= w - d - j$, one expresses
\[
\sum\limits _{w=d+j+2} ^n (-1) ^{w-d-j} \binom{n-d-j}{w-d-j} x^{n-w} y^w =
\sum\limits _{s=2} ^{n-d-j} (-1)^{s} \binom{n-d-j}{s} x^{n-d-j-s} y^{d+j+s}
\]
and concludes that
\begin{equation}   \label{TheCore}
\mathcal{W}_{n, d+j} (x,y) = \binom{n}{d+j} (q-1) (x-y) ^{n-d-j} y^{d+j}.
\end{equation}
The equality $\mathcal{W}_{n, n-k} (x,y) = \binom{n}{k} (q-1) (x-y)^k y^{n-k}$ is exactly the claim (c) of Lemma 1 from Kim and Nyun's work \cite{KH}.
Plugging in  (\ref{TheCore}) in (\ref{TheEqualityUnder Study}) and bearing in mind that $d +g = n+1 -k$, one obtains (\ref{WByD}).

In order to prove (\ref{WeightsByD1}) and (\ref{WeightsByD2}), let us put
\[
\mathcal{V}_C (x,y) := \mathcal{W}_C (x,y) - \mathcal{M}_{n, n+1-k} (x,y)
\]
and note that $\mathcal{V}_C (x,y) = \sum\limits _{w=d} ^n \mathcal{V}_C ^{(w)} x^{n-w} y^w$ with  $\mathcal{V}_C ^{(w)} = \mathcal{W}_C ^{(w)}$
for $d \leq w \leq n-k,$
\[
\mathcal{V}_C ^{(w)} = \mathcal{W}_C^{(w)} - \mathcal{M}_{n, n+1-k} ^{(w)}  = \mathcal{W}_C ^{(w)} - \binom{n}{w} \sum\limits _{i=0} ^{w-n-1+k} (-1)^{i} \binom{w}{i} (q^{w-n+k-i} -1)
\]
for $d + g = n+1 -k \leq w \leq n$.
Making use of (\ref{WByD}), one expresses
\begin{align*}
\mathcal{V}_C (x,y) = (q-1) \sum\limits _{i=0} ^{g + g^{\perp}-2} c_i \binom{n}{d+i} \sum\limits _{s=0} ^{n-d-i} \binom{n-d-i}{s} (-1)^{n-d-i-s} x^s y^{n-s}  \\
= (q-1) \sum\limits _{s=0} ^{n-d} \left[ \sum\limits _{i=0} ^{\min (n-d-s, g + g^{\perp} -2)} c_i \binom{n}{d+i} \binom{n-d-i}{s} (-1)^{n-d-i-s} \right] x^s y^{n-s},
\end{align*}
after changing the summation order.
Setting  $w:= n-s$, one obtains
\[
\mathcal{V}_C (x,y) = (q-1) \sum\limits _{w=d} ^n \left[ \sum\limits _{i=0} ^{\min (w-d, n-d-d^{\perp})} c_i \binom{n}{d+i} \binom{n-d-i}{n-w} (-1)^{w-d-i} \right] x^{n-w} y^w.
\]
Then
\[
\binom{n}{d+i} \binom{n-d-i}{n-w} = \binom{n}{w} \binom{w}{d+i},
\]
allows to  concludes that
\[
\mathcal{V}_C ^{(w)} = (q-1) \binom{n}{w} \sum\limits _{i=0} ^{\min (w-d, n-d-d^{\perp})} c_i \binom{w}{d+i} (-1) ^{w-d-i}
 \ \ \mbox{   for  } \ \ \forall d \leq w \leq n,
\]
which proves (\ref{WeightsByD1}), (\ref{WeightsByD2}).

Towards (\ref{DCByW1}), (\ref{DCByW2}), let us introduce $z := x-y$ and express (\ref{WByD}) in the form
\begin{equation}    \label{WorkDByW}
\mathcal{V}_C (y+z,y) = (q-1) \sum\limits _{i=0} ^{g + g^{\perp} -2} c_i \binom{n}{d+i} z^{n-d-i} y^{d+i}.
\end{equation}
On the other hand,
\begin{align*}
\mathcal{V}_C (y+z,y) = \sum\limits _{w=d} ^n \mathcal{V}_C ^{(w)} (y+z) ^{n-w} y^w \\
= \sum\limits _{w=d} ^n \sum\limits _{s=0} ^{n-w} \binom{n-w}{s} \mathcal{V}_C ^{(w)} y^{n-s} z^s =
\sum\limits _{s=0} ^{n-d} \left[ \sum\limits _{w=d} ^{n-s} \binom{n-w}{s} \mathcal{V}_C ^{(w)} \right] y^{n-s} z^s,
\end{align*}
after changing the summation order.
Comparing the coefficients of $y^{d+i}z^{n-d-i}$ in the left and right hand side of (\ref{WorkDByW}), one obtains
\[
\sum\limits _{w=d} ^{d+i} \binom{n-w}{n-d-i} \mathcal{V}_C ^{(w)} = (q-1) c_i \binom{n}{d+i},
\]
whereas
\[
c_i = (q-1)^{-1} \binom{n}{d+i}^{-1} \sum\limits _{w=d} ^{d+i} \binom{n-w}{n-d-i} \mathcal{V}_C ^{(w)}.
\]
Combining with (\ref{MDSWeightDistribution}), one justifies (\ref{DCByW1}) and  (\ref{DCByW2}).
These formulae imply also the fact  that $(q-1) \binom{n}{d+i} c_i \in {\mathbb Z}$ are integers for all $0 \leq i \leq g + g^{\perp} -2$.

The substitution by  (\ref{DCByW1}), (\ref{DCByW2}),  (\ref{MDSWeightDistribution}) in  (\ref{WByD}) yields
\begin{align*}
\mathcal{W}_C (x,y)= \mathcal{M}_{n,n+1-k} (x,y)
 + \sum\limits _{i=0} ^{g + g^{\perp} -2} \sum\limits _{w=d} ^{d+i} \binom{n-w}{n-d-i} \mathcal{W}_C ^{(w)} (x-y)^{n-d-i} y^{d+i} \\
  - \sum\limits _{i=g} ^{g + g^{\perp} -2} \sum\limits _{w=d+g} ^{d+i} \binom{n-w}{n-d-i} \mathcal{M}_{n,n+1-k} ^{(w)} (x-y) ^{n-d-i} y^{d+i}.
  \end{align*}
  One exchanges the summation order in the double sums towards
  \begin{align*}
  \mathcal{W}_C  (x,y) = \mathcal{M}_{n,n+1-k} (x,y)
  + \sum\limits _{w=d} ^{d + g + g^{\perp} -2} \mathcal{W}_C ^{(w)} \sum\limits  _{i=w-d} ^{g + g^{\perp} -2} \binom{n-w}{n-d-i} (x-y) ^{n-d-i} y^{d+i}  \\
  - \sum\limits _{w=d+g} ^{d + g + g^{\perp} -2} \mathcal{M}_{n,n+1-k} ^{(w)} \sum\limits _{i=w-d} ^{g + g^{\perp} -2} \binom{n-w}{n-d-i}
   (x-y) ^{n-d-i} y^{d+i}.
  \end{align*}
  Introducing $s := d+i$, one obtains (\ref{HWEByFirstCoeff}) with (\ref{PhiFactor}) and (\ref{ExplicitTerm}).

\end{proof}

Comparing the coefficients of $x^{n-d} y^d$
 in the left and right hand sides of (\ref{WByD}), one obtains
 $\mathcal{W}_C ^{(d)} = (q-1) \binom{n}{d} c_0$ for a linear code $C$ of genus $g  \geq 1$.
We claim that  $c_0 <1$.
To this end, note that for any $d$-tuple  $\{ i_1, \ldots , i_d \} \subset \{ 1, \ldots , n \}$, supporting a word $c \in C$ of weight $d$ there are exactly $q-1$ words $c'\in C$ with ${\rm Supp} (c')  = {\rm Supp} (c) = \{ i_1, \ldots , i_d \}$.
That is due to the fact that the columns $H_{i_1}, \ldots , H_{i_d}$ of an arbitrary parity check matrix $H$ of $C$ are of rank $d-1$ and there are no words of weight $\leq d-1$ in the right null space of the matrix $(H_{i_1} \ldots H_{i_d})$.
It is clear that $\nu \leq \binom{n}{d}$, so that
\[
c_0 = \frac{\nu}{\binom{n}{d}} \leq 1.
\]
If we assume that $c_0 =1$ then any $d$-tuple of columns of $H$ is linearly dependent.
Bearing in mind that ${\rm rk} H = n-k$, one concludes that $d > n-k$.
Combining with Singleton Bound $d \leq n-k+1$, one obtains $d = n-k+1$.
That contradicts the assumption that $C$ is not an MDS-code and proves that $c_0 <1$ for any ${\mathbb F}_q$-linear code $C \subset {\mathbb F}_q^n$ of genus
 $g \geq 1$.
Note that $c_0$ can be interpreted as the probability for a $d$-tuple to support a word of weight $d$ from $C$.

\section{The Riemann Hypothesis Analogue and the formal self-duality of a linear code}

Recall that a linear code $C \subset {\mathbb F}_q^n$ with dual code $C^{\perp} \subset {\mathbb F}_q^n$ is formally self-dual  if $C$ and $C^{\perp}$ have one and a same number  $\mathcal{W}_C ^{(w)} = \mathcal{W}_{C^{\perp}} ^{(w)}$  of codewords of weight $0 \leq w \leq n$.
Let us mention some trivial consequences of the formal self-duality of $C$.
First of all, $C$ and $C^{\perp}$ have one and a same minimum distance $d = d(C) = d(C^{\perp}) = d^{\perp}$.
Further, $C$ and $C^{\perp}$ have one and a same cardinality
\[
q^{\dim C} = \sum\limits _{w=0} ^n \mathcal{W}_C ^{(w)} = \sum\limits _{w=0} ^n \mathcal{W}_C ^{(w)} = q^{\dim C^{\perp}},
\]
so that $k = \dim C = \dim C^{\perp} = k^{\perp}$ and the length $n = k + k^{\perp} = 2k$ is an even integer.
The genera $g  = k+1 -d = g ^{\perp}$ also coincide.
Let $P_C(t) = \sum\limits _{i=0} ^{2g} a_i t^{i}$  and $P_{C^{\perp} } = \sum\limits _{i=0} ^{2g} a_i ^{\perp} t^{i}$ be the zeta polynomials of $C$, respectively, of $C^{\perp}$.
The consecutive comparison of the coefficients of $x^{n-d}y^d, x^{n-d-1}y^{d+1}, \ldots , x^{n-d-2g} y^{d+2g}$ from the homogeneous polynomial
\begin{align*}
a_0 \mathcal{M}_{2k,d} (x,y) + a_1 \mathcal{M}_{2k,d+1}(x,y) + \ldots + a_{2g} \mathcal{M}_{2k,d+2g}(x,y) = \mathcal{W}_C(x,y)  \\
= \mathcal{W}_{C^{\perp}} (x,y) =
 a_0^{\perp} \mathcal{M}_{2k,d} (x,y) + a_1^{\perp} \mathcal{M}_{2k,d+1}(x,y) + \ldots + a_{2g}^{\perp} \mathcal{M}_{2k,d+2g}(x,y)
\end{align*}
in $x,y$ yields $a_i = a_i ^{\perp}$ for $\forall 0 \leq i \leq 2g$.
It is clear that $a_i = a_i ^{\perp}$ for $\forall 0 \leq i \leq 2g$ suffices for $\mathcal{W}_C(x,y) = \mathcal{W}_{C^{\perp}} (x,y)$, so that the formal self-duality of $C$ is tantamount to the coincidence $P_C(t) = P_{C^{\perp}}(t)$ of the zeta polynomials of $C$ and $C^{\perp}$.
Duursma has shown that Mac Williams identities for $\mathcal{W}_C ^{(w)}$ and $\mathcal{W}_{C^{\perp}}^{(w)}$ are equivalent to the functional equation
(\ref{DuursmaFunctionalEquation})  for the zeta polynomials $P_C(t)$, $P_{C^{\perp}} (t)$ of $C, C^{\perp} \subset {\mathbb F}_q^n$ with genera $g, g^{\perp}$.
Thus, an ${\mathbb F}_q$-linear code $C \subset {\mathbb F}_q^n$ is formally self-dual if and only if its zeta polynomial $P_C(t)$ satisfies the functional equation
\begin{equation}    \label{FuncEqFuncFieldGenusGCurveOverFQ}
P_C(t) = P_C \left( \frac{1}{qt} \right) q^g t^{2g}
\end{equation}
 of the Hasse-Weil polynomial of the  function field of a curve of genus $g$ over ${\mathbb F}_q$.

\begin{proposition}     \label{RHAImpliesFSD}
If a linear code $C \subset {\mathbb F}_q^n$ satisfies the Riemann Hypothesis Analogue then  $C$ is formally self-dual, i.e.,
 the zeta polynomial $P_C(t)$ of $C$ is subject to the functional equation  (\ref{FuncEqFuncFieldGenusGCurveOverFQ})
of the Hasse-Weil polynomial of the function field of a curve of genus $g$ over ${\mathbb F}_q$.
\end{proposition}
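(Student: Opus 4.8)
The plan is to show directly that the Riemann Hypothesis Analogue forces the coincidence $P_C(t) = P_{C^{\perp}}(t)$ of the two zeta polynomials, which by the discussion preceding the statement is exactly the formal self-duality of $C$ and the functional equation (\ref{FuncEqFuncFieldGenusGCurveOverFQ}). We may assume $P_C$ is non-constant, i.e.\ $g \geq 1$ (equivalently $C$ is not MDS), since for $g = 0$ one has $P_C(t) \equiv 1$ and there are no roots to constrain. Write $r = g + g^{\perp} = \deg P_C$ and factor $P_C(t) = a_r \prod_{k=1}^r (t - \rho_k)$ over $\mathbb{C}$. The hypothesis is that every root satisfies $|\rho_k| = 1/\sqrt{q}$; in particular no $\rho_k$ vanishes, so $a_0 = P_C(0) \neq 0$ and the degree is genuinely $r$.

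The crux is a symmetry of the root multiset. Since $P_C \in \mathbb{Q}[t] \subset \mathbb{R}[t]$, the roots are stable under complex conjugation, with $\rho$ and $\overline{\rho}$ of equal multiplicity. On the circle $|t| = 1/\sqrt{q}$ one has $\rho\,\overline{\rho} = |\rho|^2 = 1/q$, hence $\overline{\rho} = 1/(q\rho)$; this holds for the real roots $\pm 1/\sqrt{q}$ as well. Therefore the involution $\rho \mapsto 1/(q\rho)$ coincides on the roots with complex conjugation, and the multiset $\{\rho_k\}$ is invariant under it.

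Next I would feed this symmetry into Duursma's functional equation (\ref{DuursmaFunctionalEquation}), which reads $P_{C^{\perp}}(t) = q^g t^{r} P_C\!\left(\tfrac{1}{qt}\right)$. Writing $P_C\!\left(\tfrac{1}{qt}\right) = (qt)^{-r}\sum_{i=0}^r a_i q^{r-i} t^{r-i}$, one sees that $q^g t^r P_C\!\left(\tfrac{1}{qt}\right)$ is a polynomial of degree $r$ with leading coefficient $q^g a_0 \neq 0$, whose roots are exactly $\{1/(q\rho_k)\}$, counted with the multiplicities of the $\rho_k$. By the previous paragraph this multiset equals $\{\rho_k\}$, so $P_{C^{\perp}}$ and $P_C$ share all roots with the same multiplicities and have the same degree $r$; hence $P_{C^{\perp}}(t) = \kappa\, P_C(t)$ for some constant $\kappa \in \mathbb{C}$.

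Finally I pin down $\kappa = 1$ by evaluation at $t = 1$: since $|1| = 1 \neq 1/\sqrt{q}$, the value $1$ is not a root, and (\ref{CoefficientSum1}) applied to $C$ and to $C^{\perp}$ gives $P_C(1) = P_{C^{\perp}}(1) = 1$, whence $\kappa = 1$ and $P_{C^{\perp}} = P_C$. This is the formal self-duality of $C$, equivalently the functional equation (\ref{FuncEqFuncFieldGenusGCurveOverFQ}). The one point needing care --- and the only genuine subtlety of the argument --- is the bookkeeping of multiplicities in the identity $\overline{\rho} = 1/(q\rho)$, together with the check that $a_0 \neq 0$, so that passing to $P_C(1/(qt))$ neither lowers the degree nor introduces a spurious root at the origin; the remaining computations are routine.
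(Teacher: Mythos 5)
Your proposal is correct and reaches the conclusion by a genuinely more uniform route than the paper. The paper splits the factorization of $P_C(t)$ into three cases --- even degree with all roots in conjugate pairs (\ref{EvenGeneric}), even degree with a factor $t^2 - \frac{1}{q}$ (\ref{EvenNonGeneric}), and odd degree (\ref{Odd}) --- computes in each case the explicit proportionality constant arising from (\ref{DuursmaFunctionalEquation}) (namely $q^{g-m}$, $-q^{g-m}$, $-\varepsilon q^{g-m-\frac{1}{2}}$), and then evaluates the identities for the reduced polynomials $D_C, D_{C^{\perp}}$ at $t=1$ to force the constant to equal $1$ in the first case and to obtain contradictions in the other two. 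Your observation that $\overline{\rho} = \frac{1}{q\rho}$ for every root on the circle $|z| = \frac{1}{\sqrt{q}}$, so that the root multiset is stable under $\rho \mapsto \frac{1}{q\rho}$ (including at the fixed points $\pm \frac{1}{\sqrt{q}}$), collapses all three cases into one: the functional equation gives $P_{C^{\perp}} = \kappa P_C$ outright, and your normalization $P_C(1) = P_{C^{\perp}}(1) = 1$ via (\ref{CoefficientSum1}) is exactly what the paper's detour through the reduced polynomials at $t=1$ amounts to, since $P(t) = (1-t)(1-qt)D(t) + t^{\mathrm{genus}}$ evaluated at $t=1$ just says $P(1)=1$. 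Your bookkeeping of multiplicities and the check $a_0 = P_C(0) \neq 0$, with leading coefficient $q^g a_0$ on the transformed side, are sound.

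Two remarks. First, to land exactly on (\ref{FuncEqFuncFieldGenusGCurveOverFQ}), whose exponent is $2g$ rather than $r = g + g^{\perp}$, you still need $g = g^{\perp}$; the paper extracts this from $q^{g-m} = 1$. In your setup it is one line you should add: once $P_{C^{\perp}} = P_C$, comparing the coefficients of $t^{r}$ and of $t^{0}$ on the two sides of (\ref{DuursmaFunctionalEquation}) gives $a_r = q^g a_0$ and $a_0 = q^{g-r} a_r$, whence $q^{2g-r} = 1$ and $r = 2g$, using $a_0 \neq 0$. Second, the paper's longer case analysis yields by-products your uniform argument leaves implicit --- that under the Riemann Hypothesis Analogue the degree of $P_C(t)$ is even and the form (\ref{EvenNonGeneric}) cannot occur --- and precisely the resulting factorization $P_C(t) = a_{2g} \prod_{j} \left( t - \frac{e^{i\varphi_j}}{\sqrt{q}} \right) \left( t - \frac{e^{-i\varphi_j}}{\sqrt{q}} \right)$ is invoked later in Corollary \ref{BoundOnFieldCardinality} and in Proposition \ref{HasseWeilDecomposition} (iv). These facts do follow a posteriori from your conclusion (with $r = 2g$, the form (\ref{EvenNonGeneric}) would force $P_C = -P_C$ under the functional equation, an absurdity), but if your argument is to replace the paper's proof, that deduction should be recorded.
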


\begin{proof}

Let us assume that $P_C(t)$ of degree $r :=  g + g^{\perp}$ satisfies the Riemann Hypothesis Analogue, i.e.,
\[
P_C(t) = a_r \prod\limits _{j=1} ^r (t - \alpha _j) \in {\mathbb Q}[t]
\]
for some $\alpha _j \in {\mathbb C}$ with $|\alpha _j| = \frac{1}{\sqrt{q}}$ for all $1 \leq j \leq r$.
If $\alpha _j$ is a real root of $P_C(t)$ then $\alpha _j = \frac{\varepsilon}{\sqrt{q}}$ with $\varepsilon = \pm 1$.
We claim that in the case of an even degree $r = 2m$, the zeta polynomial $P_C(t)$ is of the form
\begin{equation}   \label{EvenGeneric}
P_C(t) = a_{2m} \prod\limits _{i=1} ^m (t - \alpha _i)(t - \overline{\alpha _i})
\end{equation}
or of the form
\begin{equation}   \label{EvenNonGeneric}
P_C(t) = a_{2m} \left( t^2 - \frac{1}{q} \right) \prod\limits _{i=1} ^{m-1}  (t - \alpha _i)( t - \overline{\alpha _i}),
\end{equation}
while for an odd degree $r = 2m+1$ one has
\begin{equation}    \label{Odd}
P_C(t) = a_{2m+1} \left( t - \frac{\varepsilon}{\sqrt{q}} \right) \prod\limits _{i=1} ^m (t - \alpha _i)(t - \overline{\alpha _i})
\end{equation}
for some $\varepsilon \in \{ \pm 1\}$.
Indeed, if $\alpha _i \in {\mathbb C} \setminus {\mathbb R}$ is a complex, non-real root of $P_C(t) \in {\mathbb Q}[t] \subset {\mathbb R}[t]$ then $\overline{\alpha _i}  \neq \alpha _i$ is also a root of $P_C(t)$ and $P_C(t)$ is divisible by $(t - \alpha _i)(t - \overline{\alpha _i})$.
If $P_C(t) =0$ has three real roots $\alpha _1, \alpha _2, \alpha _3 \in \left \{ \frac{1}{\sqrt{q}}, - \frac{1}{\sqrt{q}} \right \}$, then at least  two of them coincide.
For $\alpha _1 = \alpha _2 = \frac{\varepsilon}{\sqrt{q}}$ one has $(t - \alpha _1)(t - \alpha _2) = (t - \alpha _1)(t - \overline{\alpha _1})$.
Thus, $P_C(t)$ has at most two real roots, which are not complex conjugate (or, equivalently, equal) to each other and $P_C(t)$ is of the form (\ref{EvenGeneric}), (\ref{EvenNonGeneric}) or (\ref{Odd}).

If $P_C(t)$ is of the form (\ref{EvenGeneric}), then $P_C(t) = a_{2m} \prod\limits _{i=1} ^m  \left( t^2 - 2 {\rm Re} ( \alpha _i) + \frac{1}{q} \right)$ and (\ref{DuursmaFunctionalEquation}) reads as
\begin{equation}   \label{EvenGenericFunctionalEquation}
P_{C^{\perp}} (t)  = a_{2m} \left[ \prod\limits _{i=1} ^m \left( \frac{1}{q} - 2 {\rm Re} ( \alpha _i) t + t^2 \right) \right ] q^{g-m} = P_C(t) q^{g-m},
\end{equation}
after multiplying each of the factors $\frac{1}{q^2t^2} - \frac{2 {\rm Re} ( \alpha _i)}{qt} + \frac{1}{q}$ by $qt^2$.
If $D_C(t)$ is Duursma's reduced polynomial of $C$ and $D_{C^{\perp}}(t)$ is Duursma's reduced  polynomial of $C^{\perp}$, then
\[
(1-t)(1 - qt) D_{C^{\perp}} (t) + t^{g^{\perp}} = P_{C^{\perp}} (t) = P_C (t) q^{g-m} = (1-t)(1 - qt) q^{g-m}  D_C(t) + q^{g-m} t^{g}
\]
implies that
\[
(1-t)(1-qt) [ D_{C^{\perp}} (t) - q^{g-m} D_C(t) ] = q^{g-m} t^g - t^{g^{\perp}}.
\]
Plugging in $t=1$, one concludes that $q^{g-m} =1$, whereas $g=m$.
As a result, $g + g^{\perp} = 2m = 2g$  specifies that $g = g^{\perp}$ and (\ref{EvenGenericFunctionalEquation})  yields $P_C(t) = P_{C^{\perp}} (t)$, which is equivalent to the formal self-duality of $C$.

If $P_C(t)$ is of the form (\ref{EvenNonGeneric}) then (\ref{DuursmaFunctionalEquation}) provides
\begin{equation}    \label{EvenNonGenericFunctionalEquation}
P_{C^{\perp}} (t) = a_{2m} \left( \frac{1}{q} - t^2 \right) \left[ \prod\limits _{i=1} ^{m-1}
\left( \frac{1}{q} - 2 {\rm Re} ( \alpha _i) t + t^2 \right) \right] q^{g-m} =  - P_C(t) q^{g-m}.
\end{equation}
Expressing by Duursma's reduced polynomials $D_C(t), D_{C^{\perp}}(t)$, one obtains
\begin{align*}
(1-t)(1 - qt) D_{C^{\perp}}(t)   + t^{g^{\perp}} = P_{C^{\perp}}(t) =  \\
 - P_C(t) q^{g-m} = - (1-t)(1 - qt) q^{g-m} D_C(t) - q^{g-m} t^g,
\end{align*}
whereas
\[
(1-t)(1-qt) [ D_{C^{\perp}}(t) + q^{g-m} D_C(t) ] = - t^{g^{\perp}} - q^{g-m} t^g.
\]
The substitution $t=1$ in the last equality of polynomials yields $-1 - q^{g-m} =0$, which is an absurd, justifying that a zeta polynomial $P_C(t)$, subject to the Riemann Hypothesis Analogue cannot be of the form (\ref{EvenNonGeneric}).

If $P_C(t)$ is of odd degree $ 2m+1$, then  (\ref{Odd})  and  (\ref{DuursmaFunctionalEquation}) yield
\begin{align*}
P_{C^{\perp}} (t)  =
 -  \varepsilon \sqrt{q} a_{2m+1} \left( t - \frac{\varepsilon}{\sqrt{q}} \right) \left[ \prod\limits _{i=1} ^m \left( \frac{1}{q} - 2 {\rm Re} ( \alpha _i) t + t^2 \right) \right] q^{g-m-1}   \\
  = - \varepsilon \sqrt{q} P_C(t) q^{g-m-1}
\end{align*}
after multiplying $\frac{1}{qt} - \frac{\varepsilon}{\sqrt{q}}$ by $- \frac{\varepsilon}{\sqrt{q}} qt$ and each
$\frac{1}{q^2t^2} - \frac{2 {\rm Re} ( \alpha _i)}{qt} + \frac{1}{q}$ by $qt^2$.
Expressing by Duursma's reduced polynomials
\begin{align*}
(1-t)(1-qt) D_{C^{\perp}}(t) + t^{g^{\perp}} = P_{C^{\perp}}(t) = - \varepsilon q^{g - m - \frac{1}{2} } P_C(t)  \\
=  - \varepsilon q^{g - m - \frac{1}{2} } (1-t)(1-qt)  D_C(t) - \varepsilon q^{g - m - \frac{1}{2}} t^g,
 \end{align*}
 one obtains
 \[
 (1-t)(1-qt) \left[ D_{C^{\perp}}(t) + \varepsilon q^{g - m - \frac{1}{2} } D_C(t) \right] = - t^{g^{\perp}} - \varepsilon q^{g - m - \frac{1}{2}} t^g.
 \]
The substitution $t=1$ implies $-1 - \varepsilon q^{g - m - \frac{1}{2} } =0$, which is an absurd, as far as $q^x =1$ if and only if $x=0$, while $g-m- \frac{1}{2}$ cannot vanish for integers $g,m$.
Thus, none zeta polynomial of odd degree satisfies the Riemann Hypothesis Analogue.

\end{proof}

\begin{corollary}   \label{BoundOnFieldCardinality}
If an   ${\mathbb F}_q$-linear code $C$ of $\dim _{{\mathbb F}_q} C =k$ and minimum distance $d$ satisfies the Riemann Hypothesis Analogue then the cardinality $q$ of the basic field  satisfies the upper bound
\[
q \leq \left( \sqrt[2g]{\binom{2k}{d} } +1 \right)^2.
\]
\end{corollary}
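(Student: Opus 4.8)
The plan is to extract the required upper bound on $q$ by comparing two estimates for the constant term $a_0$ of the zeta polynomial $P_C(t) = \sum_{i=0}^{r} a_i t^{i}$: a combinatorial lower bound coming from the mere existence of minimum-weight codewords, and an upper bound forced by the fact that all roots of $P_C(t)$ lie on the circle $S(1/\sqrt q)$. At the outset I would record that the statement presumes $g \geq 1$; for $g=0$ the code is MDS, $P_C \equiv 1$ has no roots, and the bound is vacuous.

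First I would invoke Proposition \ref{RHAImpliesFSD} to conclude that $C$ is formally self-dual, so that $n = 2k$, $g = g^{\perp}$ and $\deg P_C(t) = r = 2g$. Writing $P_C(t) = a_{2g}\prod_{j=1}^{2g}(t-\alpha_j)$ with $|\alpha_j| = 1/\sqrt q$ for all $j$, the constant term satisfies $|a_0| = |a_{2g}|\prod_{j=1}^{2g}|\alpha_j| = |a_{2g}|\,q^{-g}$. Next I use the normalization $P_C(1)=1$ from (\ref{CoefficientSum1}) together with the reverse triangle inequality $|1-\alpha_j| \geq 1 - |\alpha_j| = 1 - 1/\sqrt q$, which is strictly positive since $q \geq 2$, to get $1 = |a_{2g}|\prod_{j=1}^{2g}|1-\alpha_j| \geq |a_{2g}|(1-1/\sqrt q)^{2g}$, hence $|a_{2g}| \leq (1-1/\sqrt q)^{-2g}$. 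Combining the two displays yields the key upper bound $|a_0| \leq (1-1/\sqrt q)^{-2g}q^{-g} = (\sqrt q - 1)^{-2g}$.

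For the lower bound I would recall, from the discussion following Proposition \ref{DExpressionOfW}, that $a_0 = c_0 = \nu/\binom{n}{d}$, where $\nu$ is the number of $d$-subsets of coordinates supporting a codeword of minimum weight. Since $C$ attains its minimum distance $d$, at least one such subset exists, so $\nu \geq 1$ and therefore $a_0 \geq 1/\binom{2k}{d}$. Comparing with the upper bound gives $1/\binom{2k}{d} \leq (\sqrt q - 1)^{-2g}$, that is $(\sqrt q - 1)^{2g} \leq \binom{2k}{d}$; taking $2g$-th roots and rearranging (both sides positive, as $q \geq 2$) produces exactly $q \leq \bigl(\sqrt[2g]{\binom{2k}{d}}+1\bigr)^2$.

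The computation is short; the only point requiring care is the direction of the root estimate. One must apply the reverse triangle inequality, which produces the factor $\sqrt q - 1$, rather than the ordinary one, which would yield $\sqrt q + 1$ and an inequality in the wrong direction; and one must pair the resulting upper bound on $|a_0|$ with the combinatorial lower bound $a_0 \geq 1/\binom{2k}{d}$, not with the trivial upper bound $a_0 = c_0 < 1$. Everything else is a matter of simplifying $(1-1/\sqrt q)^{-2g}q^{-g}$ to $(\sqrt q - 1)^{-2g}$.
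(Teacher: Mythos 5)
Your proposal is correct and follows essentially the same route as the paper: both rest on Proposition \ref{RHAImpliesFSD}, the normalization $P_C(1)=1$, the estimate $\prod_j |1-\alpha_j| \geq (1-1/\sqrt q)^{2g}$ (which the paper phrases per conjugate pair as $q - 2\sqrt q \cos\varphi_j + 1 \geq (\sqrt q - 1)^2$), and the combinatorial identity $a_0 = \nu/\binom{2k}{d} \geq 1/\binom{2k}{d}$. The only cosmetic difference is that the paper keeps $\nu$ explicit, recording the sharper intermediate bound $q \leq \bigl(\sqrt[2g]{\binom{2k}{d}/\nu}+1\bigr)^2$ before relaxing to $\nu \geq 1$, whereas you substitute $\nu \geq 1$ at once.
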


\begin{proof}

By Proposition \ref{RHAImpliesFSD}, if $C$ satisfies the Riemann Hypothesis Analogue then
\[
P_C(t) = a_{2g} \prod\limits _{j=1} ^q \left( t - \frac{e^{i \varphi _j}}{\sqrt{q}} \right) \left( t - \frac{e^{-i \varphi _j}}{\sqrt{q}} \right)
\]
for some $\varphi _j \in [0, 2 \pi )$.
The formal self-duality of $C$ is equivalent to the functional equation $P_C(t) = P_C \left( \frac{1}{qt} \right) q^g t^{2g}$ of the Hasse-Weil polynomial  of a function field of genus $g$ over ${\mathbb F}_q$ and implies that $a_{2g} = q^g a_0$.
Comparing the coefficients of $x^{2k-d}y^d$ in the expression
\[
\mathcal{W}_C(x,y) = a_0 \mathcal{M}_{2k,d}(x,y) + a_1 \mathcal{M}_{2k,d+1} (x,y) + \ldots + a_{2g} \mathcal{M}_{2k, d+2g} (x,y)
\]
of the homogeneous weight enumerator $\mathcal{W}_C(x,y)$ of $C$ by the homogeneous weight enumerators $\mathcal{M}_{2k,d+i}(x,y)$  of  MDS-codes of length $2k$ and minimum distance $d+i$, one concludes that $\mathcal{W}_C ^{(d)} = a_0 \mathcal{M}_{2k,d} ^{(d)} = a_0 (q-1) \binom{2k}{d}$.
Note that any word $c \in C$ of weight $d$ is a solution of a homogeneous linear system of rank $d-1$ in $d$ variables, as far as any $d-1$ columns of a parity check matrix   of $C$ are linearly independent.
Thus, there are exactly $q-1$ words of weight $d$ from $C$ with the same support as $c$.
If $\nu$ is the number of the $d$-tuples, supporting a word $c \in C$ of weight $d$ then $\mathcal{W}_C ^{(d)} = (q-1) \nu$ and
\[
a_0  = \frac{\nu}{\binom{2k}{d}}
\]
 is the probability for a $d$-tuple to support a word of weight $d$ from $C$.
Altogether, one obtains that
\begin{align*}
P_C(t) =
\frac{q^g \nu}{\binom{2k}{d}} \prod\limits _{j=1} ^q
\left( t - \frac{e^{i \varphi _j}}{\sqrt{q}} \right) \left( t - \frac{e^{-i \varphi _j}}{\sqrt{q}} \right) =  \\
\frac{\nu}{\binom{2k}{d}} \prod\limits _{j=1} ^q ( \sqrt{q}t - e^{i \varphi _j} )( \sqrt{q}t - e^{  - i \varphi _j}) =
\frac{\nu}{\binom{2k}{d}} \prod\limits (qt^2 - 2 \sqrt{q}t \cos \varphi _j t +1).
\end{align*}
In particular,
\[
1 = P_C(1) = \frac{\nu}{\binom{2k}{d}} \prod\limits _{j=1} ^q (q - 2 \sqrt{q} \cos \varphi _j +1).
\]
Bearing in mind that $\cos \varphi _j \in [ -1,1]$, one estimates
\[
q - 2 \sqrt{q} \cos \varphi _j + 1 \geq ( \sqrt{q} -1)^2
\]
and concludes that
\[
1 = \frac{\nu}{\binom{2k}{d}} \prod\limits _{j=1} ^q (q - 2 \sqrt{q} \cos \varphi _j +1) \geq \frac{\nu}{\binom{2k}{d}} (\sqrt{q} -1)^{2g}.
\]
As a result, there follows
\[
q \leq \left( \sqrt[2g]{\frac{\binom{2k}{d}}{\nu}} +1 \right)^2.
\]
By assumption, $C$ is of minimum distance $d$, so that $\nu \geq 1$ and
\[
\left( \sqrt[2g]{\frac{\binom{2k}{d}}{\nu}} +1 \right)^2 \leq \left( \sqrt[2g]{\binom{2k}{d}} +1 \right)^2.
\]

\end{proof}

\begin{proposition}   \label{FSD}
The following conditions are equivalent for a  linear code $C \subset {\mathbb F}_q^n$:

(i) $C$ is formally self-dual, i.e., the zeta polynomial $P_C(t)$ of $C$ satisfies the functional equation
\[
P_C(t) = P_C \left( \frac{1}{qt} \right) q^g t^{2g}
\]
of the Hasse-Weil polynomial of the function field of a curve of genus $g$ over ${\mathbb F}_q$;

(ii) Duursma's reduced polynomial $D_C(t) = \sum\limits _{i=0} ^{g + g^{\perp} -2} c_i t^{i}$ satisfies the functional equation
\begin{equation}   \label{Genus_G-1_Func_Eq}
D_C(t) = D_C \left( \frac{1}{qt} \right) q^{g-1} t^{2g-2}
\end{equation}
of the Hasse-Weil polynomial of the function field of a curve of genus $g-1$ over ${\mathbb F}_q$;

(iii)  the coefficients of Duursma's reduced polynomial $D_C(t) = \sum\limits _{i=0} ^{g  + g^{\perp} -2} c_i t^{i}$ of $C$ satisfy the equalities
\begin{equation}   \label{Realtions}
c_{g-1+i} = q^{i} c_{g-1 -i} \ \ \mbox{   for }   \ \ \forall 1 \leq i \leq g-1;
\end{equation}

(iv)  the dual code $C^{\perp} \subset {\mathbb F}_q^n$ of $C$ has dimension $\dim _{{\mathbb F}_q} C^{\perp} = \dim _{{\mathbb F}_q} C = k$, genus $g(C^{\perp}) = g(C) =g$ and the homogeneous weight enumerator of $C$ is
\begin{equation}   \label{FSDHWE}
\mathcal{W}_C (x,y) = \mathcal{M}_{2k, k+1} (x,y) + \sum\limits _{j=0} ^{g-1} c_{g-1-j} w_j (x,y),
\end{equation}
where
\begin{equation}   \label{GenericBinomial}
w_j (x,y) := (q-1) \binom{2k}{k+j} \left[ (x-y) ^{k+j} y^{k-j} + q^{j} (x-y) ^{k-j} y^{k+j} \right]
\end{equation}
 for   $ 1 \leq j \leq g-1$.
\begin{equation}   \label{ZeroBinomial}
w_0 (x,y) := (q-1) \binom{2k}{k} (x-y)^k y^k.
\end{equation}

(v) the dual code $C^{\perp} \subset {\mathbb F}_q^n$ of $C$ has dimension $\dim _{{\mathbb F}_q} C^{\perp} = \dim _{{\mathbb F}_q} C = k$, genus $g(C^{\perp}) = g(C) =g$ and the homogeneous weight enumerator
\begin{equation}   \label{ConstructionFSDHWE}
\mathcal{W}_C (x,y) = \mathcal{M}_{2k,k+1} (x,y) + \sum\limits _{w=d} ^{k-1} \mathcal{W}_C ^{(w)} \varphi _w (x,y) + \mathcal{W}_C ^{(k)} (x-y)^k y^k
\end{equation}
with
\begin{equation}   \label{ConstructionBinomial}
\varphi _w (x,y) := \sum\limits _{s=w} ^{k-1} \binom{2k-w}{s-w} \left[ (x-y)^{2k-s} y^s + q^{k-s} (x-y) ^s y^{2k-s} \right] + \binom{2k-w}{k} (x-y)^k y^k
\end{equation}
for $d \leq w \leq k-1$, so that $C$  can be obtained from   an MDS-code of the same length $2k$ and dimension $k$ by removing and adjoining appropriate words, depending explicitly on the numbers $\mathcal{W}_C ^{(d)}, \mathcal{W}_C ^{(d+1)}, \ldots , \mathcal{W}_C ^{(k)}$ of the codeword of $C$ of weight
 $\leq k = \dim _{{\mathbb F}_q} C$.

\end{proposition}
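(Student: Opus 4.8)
The plan is to establish the cycle $(i)\Leftrightarrow(ii)\Leftrightarrow(iii)\Leftrightarrow(iv)\Leftrightarrow(v)$, reducing every link to an algebraic identity about $P_C$, $D_C$ and the weight--coefficient dictionary already proved in Proposition~\ref{DExpressionOfW}. Throughout I will use the facts recorded just before the statement: formal self-duality forces $n=2k$, $d=d^\perp$ and $g=g^\perp$, so that $D_C(t)=\sum_{i=0}^{2g-2}c_i t^i$ and $d+g=k+1=n+1-k$. The dimension and genus assertions $\dim C^\perp=k$, $g^\perp=g$ appearing in $(iv)$ and $(v)$ are therefore part of the formal self-duality already shown equivalent to $(i)$.

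For $(i)\Leftrightarrow(ii)$ I would substitute $t\mapsto \frac{1}{qt}$ into the defining identity $P_C(t)=(1-t)(1-qt)D_C(t)+t^g$ of~(\ref{DefD}). The whole computation rests on the elementary simplification $\left(1-\frac{1}{qt}\right)\left(1-\frac{1}{t}\right)=\frac{(1-t)(1-qt)}{qt^2}$; multiplying $P_C\left(\frac{1}{qt}\right)$ by $q^g t^{2g}$ then rewrites it as $(1-t)(1-qt)q^{g-1}t^{2g-2}D_C\left(\frac{1}{qt}\right)+t^g$. Comparing with $P_C(t)=(1-t)(1-qt)D_C(t)+t^g$, the functional equation~(\ref{FuncEqFuncFieldGenusGCurveOverFQ}) for $P_C$ becomes, after cancelling $t^g$ and the nonzero polynomial factor $(1-t)(1-qt)$, exactly the functional equation~(\ref{Genus_G-1_Func_Eq}) for $D_C$. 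For $(ii)\Leftrightarrow(iii)$ I would expand $q^{g-1}t^{2g-2}D_C\left(\frac{1}{qt}\right)=\sum_{i=0}^{2g-2}c_i\,q^{g-1-i}t^{2g-2-i}$ and reindex by $j=2g-2-i$; then~(\ref{Genus_G-1_Func_Eq}) is the coefficientwise identity $c_{g-1+i}=q^i c_{g-1-i}$, which is precisely~(\ref{Realtions}) (the case $i=0$ being trivial).

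For $(iii)\Leftrightarrow(iv)$ I would start from~(\ref{WByD}) with $n=2k$ and $d=k+1-g$ and split the index range $0\le i\le 2g-2$ about its midpoint $i=g-1$ by writing $i=g-1\mp j$. Using $\binom{2k}{k-j}=\binom{2k}{k+j}$ together with $c_{g-1+j}=q^j c_{g-1-j}$, the two symmetric summands fuse into $c_{g-1-j}w_j(x,y)$ with $w_j$ as in~(\ref{GenericBinomial}), while the midpoint $i=g-1$ yields the term $c_{g-1}w_0$ of~(\ref{ZeroBinomial}); this is~(\ref{FSDHWE}). Conversely, expanding each $w_j$ back into the monomials $(x-y)^{2k-d-i}y^{d+i}$ and comparing with~(\ref{WByD}), whose coefficients $c_i$ are unique by Proposition~\ref{LinearCode}, returns the relations $c_{g-1+j}=q^j c_{g-1-j}$.

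The main work, and the step I expect to be the true obstacle, is $(iv)\Leftrightarrow(v)$. I would substitute the dictionary~(\ref{DCByW1}), which expresses each $c_\ell$ through $\mathcal{W}_C^{(d)},\ldots,\mathcal{W}_C^{(d+\ell)}$, into~(\ref{FSDHWE}). The factors $(q-1)\binom{2k}{d+\ell}$ coming from $w_{g-1-\ell}$ cancel those of~(\ref{DCByW1}), and with $s=d+\ell$ (hence $j=k-s$, $k+j=2k-s$, $k-j=s$) each product $c_\ell w_{g-1-\ell}$ takes the form $[\sum_{w=d}^{s}\binom{2k-w}{2k-s}\mathcal{W}_C^{(w)}]\,[(x-y)^{2k-s}y^s+q^{k-s}(x-y)^s y^{2k-s}]$ for $d\le s\le k-1$, and $[\sum_{w=d}^{k}\binom{2k-w}{k}\mathcal{W}_C^{(w)}]\,(x-y)^k y^k$ for $s=k$. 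Interchanging the summation order to collect the coefficient of each $\mathcal{W}_C^{(w)}$, and rewriting $\binom{2k-w}{2k-s}=\binom{2k-w}{s-w}$, produces exactly $\varphi_w(x,y)$ of~(\ref{ConstructionBinomial}) for $d\le w\le k-1$ and the isolated term $(x-y)^k y^k$ for $w=k$, which is~(\ref{ConstructionFSDHWE}); the reverse implication runs the same computation backwards via~(\ref{WeightsByD1}), or simply invokes the bijectivity of the weight--coefficient correspondence of Proposition~\ref{DExpressionOfW}. The delicate point is not any single estimate but the bookkeeping: one must track three coupled index ranges at once and verify that the change of summation order reconstitutes the exact split structure of $\varphi_w$, namely the symmetric pair of monomial families together with the single middle term $\binom{2k-w}{k}(x-y)^k y^k$.
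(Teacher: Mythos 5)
Your proposal is correct and follows essentially the same route as the paper: the same substitution of $P_C(t)=(1-t)(1-qt)D_C(t)+t^g$ into the functional equation for $(i)\Leftrightarrow(ii)$, the same coefficientwise reindexing for $(ii)\Leftrightarrow(iii)$, the same splitting of (\ref{WByD}) about the midpoint $i=g-1$ (in the variable $z=x-y$) for $(iii)\Leftrightarrow(iv)$, and the same substitution of the dictionary (\ref{DCByW1}) with an exchange of summation order for $(iv)\Leftrightarrow(v)$. Even the shared looseness about when $g=g^{\perp}$ and $n=2k$ may be invoked matches the paper's own treatment, so there is nothing to add.
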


\begin{proof}

Towards $(i) \Rightarrow (ii)$, one  substitutes by $P_C(t) = (1-t)(1-qt) D_C(t) + t^g$ in (\ref{FuncEqFuncFieldGenusGCurveOverFQ}),  in order to obtain
\[
(1-t)(1-qt) D_C(t) + t^g =
 (qt-1)(t-1) \left[ D_C \left( \frac{1}{qt} \right) q^{g-1} t^{2g-2} \right] + t^g,
\]
whereas (\ref{Genus_G-1_Func_Eq}).

Conversely,  $(ii) \Rightarrow (i)$ is justified by
\begin{align*}
P_C(t) = (1-t)(1-qt) D_C(t) + t^g =  \\
= (t-1)(qt-1) \left[ D_C \left( \frac{1}{qt} \right) q^{g-1} t^{2g-2} \right] + t^g    \\
= \left[ \left( 1 - \frac{1}{t} \right) t \right]
\left[ \left( 1 - \frac{1}{qt} \right) qt \right] \left[ D_C \left( \frac{1}{qt} \right) q^{g-1} t^{2g-2} \right] + \frac{q^g t^{2g}}{q^g t^g}  \\
= \left[  \left( 1 - \frac{q}{qt} \right) \left( 1 - \frac{1}{qt} \right) D_C \left( \frac{1}{qt} \right) + \frac{1}{(qt)^g} \right] q^g t^{2g} =
P_C \left( \frac{1}{qt} \right) q^g t^{2g}.
\end{align*}
That proves the equivalence $(i) \Leftrightarrow (ii)$.

Towards $(ii) \Leftrightarrow (iii)$, note that the functional equation of $D_C(t)$ reads as
\begin{align*}
\sum\limits _{i=0} ^{2g-2} c_i t^{i} = D_C(t) = D_C \left( \frac{1}{qt} \right) q^{g-1} t^{2g-2} =
\left( \sum\limits _{i=0} ^{2g-2} \frac{c_i}{q^{i} t^{i}} \right) q^{g-1} t^{2g-2}   \\
=\sum\limits _{i=0} ^{2g-2} c_i q^{g-1-i} t^{2g-2-i} =
\sum\limits _{j=0} ^{2g-2} c_{2g-2-j} q^{-g+1+j} t^{j}.
\end{align*}
Comparing the coefficients of the left-most and the right-most side, one expresses the formal self-duality of $C$ by the relations
\[
c_j = q^{-g+1+j} c_{2g-2-j} \ \ \mbox{  for } \ \ \forall 0 \leq j \leq 2g-2.
\]
Let $i := g-1-j$, in order to express the above conditions in the form
\begin{equation}   \label{AllRelations}
c_{g-1+i} = q^{i} c_{g-1-i} \ \ \mbox{  for  }  \forall  -g+1 \leq i \leq g-1.
\end{equation}
For any $-g+1 \leq i \leq -1$ note that $c_{g-1+i} = q^{i} c_{g-1-i}$ is equivalent to $c_{g-1-i} = q^{-i} c_{g-1+i}$ and follows from (\ref{AllRelations})  with $1 \leq -i \leq g-1$.
In the case of $i=0$, (\ref{AllRelations}) holds trivially and (\ref{AllRelations}) amounts to (\ref{Realtions}).
That proves the equivalence of $(ii)$ with $(iii)$.

Towards $(iii) \Rightarrow (iv)$,  one  introduces a new variable $z:= x-y$ and expresses (\ref{WByD}) in the form
\begin{align*}
\mathcal{V}_C (y+z, y) := \mathcal{W}_C (y+z, y) - \mathcal{M}_{2k,k+1} (y+z,y) =
(q-1) \sum\limits _{i=0} ^{2g-2} c_i \binom{2k}{d+i} y^{d+i} z^{2k-d-i}  \\
 = (q-1) \sum\limits _{i=0} ^{g-1} c_i \binom{2k}{d+i} y^{d+i} z^{2k-d-i} +  (q-1) \sum\limits _{i=g} ^{2g-2} c_i \binom{2k}{d+i} y^{d+i} z^{2k-d-i}.
\end{align*}
Let us change the summation index of the first sum to $0 \leq j := g-1-i \leq g-1$,  put $1 \leq j:= i-g+1 \leq g-1$ in the second sum and make use of
 $d + g = k+1$, in order to obtain
\begin{equation}   \label{VFSDHWE}
\begin{split}
\mathcal{V}_C (y+z,y)  \\
 = (q-1) \sum\limits _{j=0} ^{g-1} c_{g-1-j}  \binom{2k}{k-j} y^{k-j} z^{k+j} +
(q-1) \sum\limits _{j=1} ^{g-1} c_{j+g-1}  \binom{2k}{k+j} y^{k+j} z^{k-j}.
\end{split}
\end{equation}
Extracting the term with $j=0$ from the first sum, one expresses
\begin{equation}   \label{GenHWE}
\begin{split}
\mathcal{V}_C (y+z,y) = (q-1) c_{g-1}  \binom{2k}{k} y^kz^k   \\
+  \sum\limits _{j=1} ^{g-1} (q-1) \binom{2k}{k+j} \left[ c_{g-1-j} y^{k-j} z^{k+j} + c_{g-1+j} y^{k+j} z^{k-j} \right]
\end{split}
\end{equation}
for an arbitrary ${\mathbb F}_q$-linear code $C \subset {\mathbb F}_q^n$.
If $C$ is formally self-dual, then plugging in by (\ref{Realtions}) in (\ref{GenHWE}) and making use of (\ref{GenericBinomial}), (\ref{ZeroBinomial}), one gets
\[
\mathcal{V}_C (y+z, y) = \sum\limits _{j=0} ^{g-1} c_{g-1-j} w_j (y+z,y).
\]
Substituting $z:= x-y$ and $\mathcal{V}_C (x,y) := \mathcal{W}_C(x,y) - \mathcal{M}_{2k,k+1} (x,y)$, one derives the equality  (\ref{FSDHWE}) for the homogeneous weight enumerator of a formally self-dual linear code $C \subset {\mathbb F}_q^{2k}$.

In order to justify that  (iv)   suffices for the formal self-duality of $C$, we use that  (\ref{FSDHWE})  with (\ref{GenericBinomial}) and (\ref{ZeroBinomial})   is equivalent to
\begin{equation} \label{FSD2}
\begin{split}
\mathcal{V}_C (y+z,y) = \sum\limits _{j=1} ^{g-1} c_{g-1-j} (q-1)  \binom{2k}{k+j} y^{k-j} z^{k+j}   \\
 + c_{g-1} (q-1) \binom{2k}{k} y^kz^k +
 \sum\limits _{j=1} ^{g-1} c_{g-1-j} (q-1) \binom{2k}{k+j} y^{k+j} z^{k-j}
\end{split}
\end{equation}
Comparing the coefficients of $y^{k+j} z^{k-j}$ with $1 \leq j \leq g-1$ from (\ref{GenHWE}) and (\ref{FSD2}), one concludes that
\[
c_{g-1+j} = c_{g-1-j} q^{j} \ \ \mbox{  for  } \ \ \forall 1 \leq j \leq g-1.
\]
These are exactly the relations (\ref{Realtions}) and imply the formal self-duality of $C$.

Towards $(iv) \Leftrightarrow (v)$, it suffices to  put $\mathcal{E}(x,y) := \sum\limits _{j=0} ^{g-1} c_{g-1-j} w_j (x,y)$ and  to derive  that $\mathcal{E}(x,y) = \sum\limits _{w=d} ^{k-1} \mathcal{W}_C ^{(w)} \varphi _w (x,y) + \mathcal{W}_C ^{(k)} (x-y) ^k y^k$.
More  precisely, introducing $i := g-1-j$, one expresses
\begin{align*}
\mathcal{E}(x,y) = \sum\limits _{i=0} ^{g-2} c_i (q-1) \binom{2k}{d+i} \left[ (x-y) ^{2k-d-i} y^{d+i} + q^{g-1-i} (x-y) ^{d+i} y^{2k-d-i} \right] \\
+ c_{g-1} (q-1) \binom{2k}{k} (x-y) ^k y^k.
\end{align*}
Plugging in by (\ref{DCByW1}) and exchanging the summation order,  one gets
\begin{align*}
\mathcal{E}(x,y) = \sum\limits _{w=d} ^{k-1} \sum\limits _{i=w-d} ^{g-2} \binom{2k-w}{d+i-w} \mathcal{W}_C ^{(w)}
 [ (x-y) ^{2k-d-i} y^{d+i} + q^{g-1-i} (x-y) ^{d+i} y^{2k-d-i} ]  \\
+ \sum\limits _{w=d} ^k \binom{2k-w}{k} \mathcal{W}_C ^{(w)} (x-y)^k y^k.
\end{align*}
Introducing $s:= d+i$ and extracting $\mathcal{W}_C ^{(w)}$ as coefficients, one  obtains
\[
\mathcal{E}(x,y) = \sum\limits _{w=d} ^{k-1} \mathcal{W}_C ^{(w)} \varphi _w (x,y) + \mathcal{W}_C ^{(k)} (x-y) ^k y^k.
\]

\end{proof}

Let $C \subset {\mathbb F}_q^n$ be an ${\mathbb F}_q$-linear code of genus $g$, whose dual $C^{\perp} \subset {\mathbb F}_q^n$ is of genus $g^{\perp}$.
In \cite{DL}, Dodunekov and Landgev introduce the near-MDS linear  codes $C$ as the ones with zeta polynomial $P_C(t) \in {\mathbb Q}[t]$ of degree
$\deg P_C(t) := g + g^{\perp} = 2$.
Thus,   $C$ is a near-MDS code if and only if  it has constant  Duursma's reduced polynomial $D_C(t) = c_0 \in {\mathbb Q}$.
Kim an Hyun prove in \cite{KH}) that a near-MDS code $C$ satisfies the Riemann Hypothesis Analogue exactly when
\[
 \frac{1}{(\sqrt{q}+1)^2} \leq c_0 \leq \frac{1}{(\sqrt{q} -1)^2}.
 \]
The next proposition characterizes the  formally-self-dual codes $C \subset {\mathbb F}_q^n$ of genus $2$, which satisfy the Riemann Hypothesis Analogue.
By Proposition \ref{FSD} (ii), $C$ is a formally self-dual linear code of genus $2$  exactly when its Duursma's reduced polynomial is
\[
D_C(t) = c_0 + c_1 t + q c_0 t^2
\]
for some $c_0, c_1 \in {\mathbb Q}$, $0 < c_0 < 1$.

\begin{proposition}   \label{RHAForQuadraticD}
A formally self-dual linear code $C \subset {\mathbb F}_q^{2k}$  with a quadratic Duursma's reduced polynomial $D_C(t) = c_0 + c_1 t + qc_0 t^2 \in {\mathbb Q}[t]$, $0 < c_0 < 1$ satisfies the Riemann Hypothesis Analogue if and only if
\begin{equation}   \label{Discriminant}
[(q+1) c_0 + c_1] ^2 \geq 4c_0,
\end{equation}
\begin{equation}  \label{Vertex}
q - 4 \sqrt{q} +1 \leq \frac{c_1}{c_0} \leq q + 4 \sqrt{q} +1,
\end{equation}
\begin{equation}   \label{EndValues}
c_1 \leq \min \left( \frac{1}{(\sqrt{q} -1) ^2} - 2 \sqrt{q} c_0, \, \frac{1}{( \sqrt{q} +1)^2} + 2 \sqrt{q} c_0 \right).
\end{equation}

\end{proposition}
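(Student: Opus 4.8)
The plan is to pass from the degree-$4$ zeta polynomial $P_C(t)$ to a single quadratic in an auxiliary variable, and then to read off the three inequalities as the classical conditions for that quadratic to have both roots in the real segment $[-2,2]$.

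First I would compute $P_C(t)$ explicitly from $D_C(t)$. Since $g=2$, the defining relation $P_C(t)=(1-t)(1-qt)D_C(t)+t^2$ (cf. (\ref{DefD})) applied to $D_C(t)=c_0+c_1t+qc_0t^2$ gives
\[
P_C(t)=q^2c_0\,t^4+qC\,t^3+B\,t^2+C\,t+c_0,
\]
where $C:=c_1-(q+1)c_0$ and $B:=2qc_0-(q+1)c_1+1$. The coefficient pattern $a_4=q^2a_0$, $a_3=qa_1$ reflects the functional equation (\ref{FuncEqFuncFieldGenusGCurveOverFQ}) which the formal self-duality of $C$ guarantees, and it is precisely this palindromic-with-weight symmetry that drives the reduction. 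Since $P_C(0)=c_0>0$, no root is at $t=0$, so I may divide by $qt^2$ and group the symmetric pairs of terms. Introducing $v:=\sqrt{q}\,t+\tfrac{1}{\sqrt{q}\,t}$, so that $qt^2+\tfrac{1}{qt^2}=v^2-2$ and $t+\tfrac{1}{qt}=\tfrac{v}{\sqrt q}$, a direct computation yields
\[
\frac{P_C(t)}{qt^2}=R(v),\qquad R(v):=c_0v^2+\frac{C}{\sqrt q}\,v+\Bigl(\frac{B}{q}-2c_0\Bigr).
\]

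The crux of the argument is that the substitution $t\mapsto v$ is a $2$-to-$1$ map carrying the circle $S\!\left(\tfrac{1}{\sqrt q}\right)$ onto the segment $[-2,2]\subset\mathbb{R}$: a point $t=\tfrac{1}{\sqrt q}e^{i\theta}$ has $v=2\cos\theta\in[-2,2]$, while the two values of $t$ lying over a given $v$ are the roots of $qt^2-\sqrt q\,v\,t+1=0$, whose product equals $\tfrac1q$. Hence a real $v\in[-2,2]$ produces complex conjugate $t$ with $|t|=\tfrac{1}{\sqrt q}$, whereas a non-real $v$, or a real $v$ with $|v|>2$, forces at least one of the two $t$ off the circle. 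Using this, I would establish the equivalence: $P_C(t)$ satisfies the Riemann Hypothesis Analogue if and only if \emph{both} roots of the quadratic $R(v)$ are real and lie in $[-2,2]$.

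Finally, since $c_0>0$ the parabola $R$ opens upward, and I would invoke the standard location criterion: both roots of $R$ lie in $[-2,2]$ exactly when its discriminant is non-negative, its vertex lies in $[-2,2]$, and $R(2)\ge0$, $R(-2)\ge0$. Translating each condition back into $c_0,c_1$ is then routine: the discriminant collapses (after multiplying by $q$) to $[(q+1)c_0+c_1]^2\ge 4c_0$, which is (\ref{Discriminant}); the vertex condition $-2\le -\tfrac{C}{2\sqrt q\,c_0}\le 2$ becomes $q-4\sqrt q+1\le \tfrac{c_1}{c_0}\le q+4\sqrt q+1$, which is (\ref{Vertex}); and the identities $qR(2)=-(\sqrt q-1)^2[c_1+2\sqrt q\,c_0]+1$ and $qR(-2)=-(\sqrt q+1)^2[c_1-2\sqrt q\,c_0]+1$ turn $R(\pm2)\ge0$ into the two upper bounds assembled in the minimum of (\ref{EndValues}). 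The main obstacle is not any single computation but the care required in the equivalence step, where one must argue both directions of ``roots on $S\!\left(\tfrac{1}{\sqrt q}\right)$ $\Leftrightarrow$ reduced roots in $[-2,2]$''; in particular the vertex condition is exactly what excludes the degenerate configurations in which both roots of $R$ fall on the same side of $[-2,2]$.
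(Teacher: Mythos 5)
Your proof is correct and follows essentially the same route as the paper: your quadratic satisfies $R(v)=4c_0\,f\!\left(\frac{v}{2}\right)$ for the paper's quadratic $f(t)= t^2 + \frac{c_1-(q+1)c_0}{2\sqrt{q}\,c_0}\,t + \frac{1-(q+1)c_1}{4qc_0}$, whose roots are $\cos\varphi,\cos\psi$, so your discriminant/vertex/endpoint criteria on $[-2,2]$ are the paper's criteria on $[-1,1]$ and translate into the identical inequalities (\ref{Discriminant}), (\ref{Vertex}), (\ref{EndValues}). The only cosmetic difference is in how the quadratic is produced: the paper compares the coefficients of $t$ and $t^2$ against the factorization (\ref{EvenGeneric}) established in the proof of Proposition \ref{RHAImpliesFSD}, whereas you use the self-reciprocal substitution $v=\sqrt{q}\,t+\frac{1}{\sqrt{q}\,t}$ and verify the circle-to-segment correspondence directly, which makes the two-way equivalence slightly more self-contained but does not change the substance of the argument.
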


\begin{proof}

According to (\ref{EvenGeneric}) from the proof of Proposition \ref{RHAImpliesFSD},  the zeta polynomial
\[
P_C(t) = (1-t)(1-qt) (qc_0 t^2 + c_1 t + c_0) + t^2
\]
satisfies the Riemann Hypothesis Analogue if and only if there exist $\varphi, \psi \in [0, 2 \pi )$ with
\[
P_C(t) = q^2 c_0 \left( t - \frac{e^{i \varphi}}{\sqrt{q}} \right) \left( t - \frac{e^{ - i \varphi}}{\sqrt{q}} \right)
\left( t - \frac{e^{i \psi}}{\sqrt{q}} \right) \left( t - \frac{e^{ - i \psi}}{\sqrt{q}} \right).
 \]
 Comparing the coefficients of $t$ and $t^2$ from $P_C(t)$, one expresses this condition by the  equalities
\begin{align*}
c_1 - (q+1) c_0 = - 2 \sqrt{q} c_0 [ \cos ( \varphi) + \cos ( \psi) ], \\
1 + 2q c_0 - (q+1) c_1 = 2q c_0 [1 + 2 \cos ( \varphi) \cos ( \psi) ].
\end{align*}
These  are equivalent to
\[
\cos ( \varphi) + \cos ( \psi) = \frac{(q+1) c_0 - c_1}{2 \sqrt{q} c_0}
\]
and
\[
\cos ( \varphi) \cos ( \psi) = \frac{1 - (q+1) c_1}{4qc_0}.
\]
In other words, the quadratic equation
\[
f(t) := t^2 + \frac{c_1 - (q+1) c_0}{2 \sqrt{q} c_0} t + \frac{1 - (q+1) c_1}{4qc_0} \in {\mathbb Q}[t]
\]
has roots $-1 \leq t_1 = \cos (\varphi)  \leq t_2 = \cos (\psi) \leq 1$.
This, in turn, holds exactly when the discriminant
\begin{equation}   \label{Discriminant1}
D(f) = \left[ \frac{c_1 - (q+1) c_0}{2 \sqrt{q} c_0} \right] ^2 - \frac{4 [ 1 - (q+1) c_1]}{4qc_0} \geq 0
\end{equation}
is non-negative, the vertex
\begin{equation}  \label{Vertex1}
-1 \leq \frac{(q+1) c_0 - c_1}{4 \sqrt{q} c_0} \leq 1
\end{equation}
belongs to the segment $[-1,1]$ and the values of $f(t)$ at the ends of this segment are non-negative,
\begin{equation}  \label{EndValues1}
f(1) \geq 0, \ \ f(-1) \geq 0.
\end{equation}
The equivalence of  (\ref{Discriminant1})  to  (\ref{Discriminant}) is straightforward.
Since $C$ is of minimum distance $d = k-1$ and $\mathcal{W}_C ^{(k-1)} = (q-1) \binom{2k}{k-1} c_0 \in {\mathbb N}$, the constant term  $c_0 >0$  of $D_C(t)$ is a positive rational number and one can   multiply (\ref{Vertex1})  by $ - 4 \sqrt{q} c_0  < 0$, add $(q+1) c_0$ to all the terms  and rewrite it  in the form
\[
(q - 4 \sqrt{q} +1) c_0 \leq c_1 \leq (q + 4 \sqrt{q} +1) c_0.
\]
Making use of $c_0 >0$, one observes that  the above inequalities are tantamount to  (\ref{Vertex}).
Finally,
\[
4q c_0 f(1) = 4q c_0 + 2 \sqrt{q} [ c_1 - (q+1) c_0 ] + 1 - (q+1) c_1 =
( - c_1 - 2 \sqrt{q} c_0)( \sqrt{q} -1) ^2 + 1 \geq 0
\]
 and
\[
4 q c_0 f( -1) = 4q c_0 - 2 \sqrt{q} [ c_1 - (q+1) c_0] + 1 - (q+1) c_1 =
(2 \sqrt{q} c_0 - c_1)( \sqrt{q} +1) ^2 +1 \geq 0
\]
can be expressed as  (\ref{EndValues}).

\end{proof}


\section{Duursma's reduced polynomial of a function field}

Let $F = {\mathbb F}_q(X)$ be the function field of a curve $X$ of genus $g$ over ${\mathbb F}_q$ and $h_g := h(F)$ be the class number of $F$, i.e., the number of the linear equivalence classes of the divisors of $F$ of degree $0$.
The present section introduces an additive decomposition of the Hasse-Weil polynomial $L_F(t) \in {\mathbb Z}[t]$ of $F$, which associates to $F$ a sequence
$\{ h_i \} _{i=1} ^{g-1}$ of virtual class numbers $h_i$ of function fields of curves of genus $i$ over ${\mathbb F}_q$.

\begin{lemma}   \label{EquivalentConditionsHasseWeilPolynomials}
The following conditions  are equivalent for a polynomial $L_g (t) \in {\mathbb Q} [t]$ of degree $\deg L_g(t) = 2g$:

$(i)$ $L_g(t)$ satisfies the functional equation
\[
L_g (t) = L_g \left( \frac{1}{qt} \right) q^g t^{2g}
\]
of the Hasse-Weil polynomial of the function field of a curve of genus $g$ over ${\mathbb F}_q$;

\[
(ii)  \hspace{4cm} L_{g-1} (t) := \frac{L_g (t) - L_g (1) t^g}{(1-t)(1-qt)}   \hspace{3cm}
\]
 is a polynomial with rational coefficients of degree $2g-2$, satisfying the functional equation
\[
L_{g-1} (t) = L_{g-1} \left( \frac{1}{qt} \right) q^{g-1} t^{2g-2}
\]
of the Hasse-Weil polynomial of the function field of a curve of genus $g-1$ over ${\mathbb F}_q$;

\[
(iii) \hspace{4cm} \quad L_g (t) = \sum\limits _{i=0} ^g h_i t^{i}  (1-t)^{g-i} (1-qt) ^{g-i}   \hspace{3cm}
\]
for some rational numbers $h_i \in {\mathbb Q}$.
\end{lemma}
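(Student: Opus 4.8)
The plan is to establish $(i)\Leftrightarrow(ii)$ by a direct substitution, $(iii)\Rightarrow(i)$ by checking the generators, and $(i)\Rightarrow(iii)$ by a dimension count; together these yield the full equivalence. Throughout I write $L_g(t)=\sum_{i=0}^{2g}a_it^i$ and first record the coefficient form of $(i)$: since $L_g\!\left(\tfrac{1}{qt}\right)q^gt^{2g}=\sum_{i=0}^{2g}a_iq^{g-i}t^{2g-i}$, condition $(i)$ is equivalent to the relations $a_{2g-i}=q^{g-i}a_i$ for $0\le i\le g$. In particular $a_{2g}=q^ga_0$, so $\deg L_g=2g$ forces $a_0\neq 0$; and the solution space of $(i)$ among polynomials of degree $\le 2g$ is cut out by the $g$ nontrivial relations with $0\le i\le g-1$, hence has dimension $g+1$ with free parameters $a_0,\dots,a_g$.

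For $(i)\Leftrightarrow(ii)$ I would argue exactly as in the proof of Proposition~\ref{FSD}, $(i)\Leftrightarrow(ii)$, with $t^g$ replaced by $L_g(1)t^g$. First I must check that $L_{g-1}(t)$ is a genuine polynomial: the factor $1-t$ divides $L_g(t)-L_g(1)t^g$ since this expression vanishes at $t=1$, and $1-qt$ divides it because evaluating $(i)$ at $t=\tfrac1q$ gives $L_g(\tfrac1q)=q^{-g}L_g(1)$, so the numerator vanishes at $t=\tfrac1q$ as well; as $1-t$ and $1-qt$ are coprime, their product divides the numerator and $\deg L_{g-1}=2g-2$, its leading coefficient being $a_{2g}/q\neq 0$. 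Then from $L_g(t)=(1-t)(1-qt)L_{g-1}(t)+L_g(1)t^g$ I substitute into $(i)$, use $\left(1-\tfrac1{qt}\right)\left(1-\tfrac1t\right)=\frac{(1-t)(1-qt)}{qt^2}$, cancel the common factor $(1-t)(1-qt)$ and the term $L_g(1)t^g$, and read off $L_{g-1}(t)=q^{g-1}t^{2g-2}L_{g-1}\!\left(\tfrac1{qt}\right)$, which is $(ii)$; reversing the steps gives $(ii)\Rightarrow(i)$.

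For $(iii)\Rightarrow(i)$ it suffices to check that each generator $P_i(t):=t^i(1-t)^{g-i}(1-qt)^{g-i}$ already satisfies the functional equation of $(i)$, since then so does any $\mathbb{Q}$-linear combination. Writing $1-\tfrac1{qt}=-\tfrac{1-qt}{qt}$ and $1-\tfrac1t=-\tfrac{1-t}{t}$, a direct substitution gives $P_i\!\left(\tfrac1{qt}\right)=q^{-g}t^{i-2g}(1-t)^{g-i}(1-qt)^{g-i}$, whence $P_i\!\left(\tfrac1{qt}\right)q^gt^{2g}=P_i(t)$, as required.

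The substantive step is $(i)\Rightarrow(iii)$, which I would obtain by the dimension count. By the computation for $(iii)\Rightarrow(i)$, the $g+1$ polynomials $P_0,\dots,P_g$ lie in the solution space of $(i)$, which I already showed has dimension $g+1$; so it remains to prove that they are linearly independent, and then they form a basis, forcing every $L_g$ satisfying $(i)$ to be of the form $\sum_{i=0}^g h_it^i(1-t)^{g-i}(1-qt)^{g-i}$. This linear independence is the only real obstacle, and it follows from a triangularity observation: since $(1-t)^{g-i}(1-qt)^{g-i}$ has constant term $1$, the polynomial $P_i$ has lowest-order term exactly $t^i$, so the $P_i$ have pairwise distinct orders of vanishing at $t=0$ and are therefore independent. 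Alternatively one can prove $(ii)\Rightarrow(iii)$ directly by induction on $g$: applying the inductive hypothesis to $L_{g-1}$ and multiplying its expansion by $(1-t)(1-qt)$ produces exactly the summands with $0\le i\le g-1$, while the leftover term $L_g(1)t^g$ supplies the top coefficient $h_g=L_g(1)$.
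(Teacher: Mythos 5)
Your proposal is correct, and while your $(i)\Leftrightarrow(ii)$ and $(iii)\Rightarrow(i)$ match the paper's proof (divisibility of $L_g(t)-L_g(1)t^g$ by $1-t$ and $1-qt$ via evaluation at $t=1$ and $t=\frac{1}{q}$, then cancellation of $(1-t)(1-qt)$ in the functional equation; and termwise verification that each $P_i(t):=t^{i}(1-t)^{g-i}(1-qt)^{g-i}$ is invariant under $L(t)\mapsto L\left(\frac{1}{qt}\right)q^gt^{2g}$), your route to the substantive implication $(i)\Rightarrow(iii)$ is genuinely different. The paper argues by induction on $g$: it uses $(ii)$ to peel off $L_g(1)t^g$, applies the inductive hypothesis to $L_{g-1}(t)$, and multiplies back by $(1-t)(1-qt)$, with base case $g=1$ --- this is exactly the argument you mention only as an alternative at the end. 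Your primary argument is instead pure linear algebra: the functional equation in coefficient form reads $a_{2g-i}=q^{g-i}a_i$ for $0\le i\le g$, of which the relation for $i=g$ is vacuous, so the solution space inside polynomials of degree $\le 2g$ has dimension $g+1$; the $g+1$ polynomials $P_0,\dots,P_g$ lie in it and are linearly independent since $P_i$ vanishes to order exactly $i$ at $t=0$, hence they form a basis and $(iii)$ follows. Your approach buys something the paper never makes explicit: the coefficients $h_i$ in $(iii)$ are \emph{uniquely} determined by $L_g(t)$, since the $P_i$ are a basis --- a fact tacitly relied upon when Proposition \ref{HasseWeilDecomposition} interprets them as $h(F)$ and virtual class numbers. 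What the paper's induction buys in exchange is the explicit recursion $L_{g-1}(t)=\frac{L_g(t)-L_g(1)t^g}{(1-t)(1-qt)}$ with $h_g=L_g(1)$ extracted at each stage, which is precisely the mechanism reused in Section 3 to pass from $L_F(t)$ to $D_F(t)$; your dimension count, being non-constructive about the $h_i$, does not by itself exhibit that recursive structure, though combined with your verified $(i)\Leftrightarrow(ii)$ nothing is lost.
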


\begin{proof}

Towards $(i) \Rightarrow (ii)$, let us note that the polynomial $M_g (t) := L_g (t) - L_g (1) t^g$ vanishes at $t=1$, so that it is divisible by $1-t$.
Further,
\[
M_g (t) = L_g (t) - L_g (1) t^g =
\left[ L_g \left( \frac{1}{qt} \right) - \frac{L_g (1)}{q^gt^g} \right] q^g t^{2g} =
M_g \left( \frac{1}{qt} \right) q^g t^{2g}
\]
  satisfies the functional equation  of the Hasse-Weil polynomial of the function field of a curve of genus $g$ over ${\mathbb F}_q$.
In particular, $M_g \left( \frac{1}{q} \right) = M_g (1) \frac{q^g}{q^{2g}} =0$ and $M_g (t)$ is divisible by the linear polynomial
 $q \left(  \frac{1}{q} - t \right) = 1 - qt$, which is relatively prime to $1-t$ in ${\mathbb Q}[t]$.
As a result,
\[
L_{g-1} (t) := \frac{M_g (t)}{(1-t)(1-qt)} \in {\mathbb Q}[t]
\]
is a polynomial of degree $\deg L_{g-1} (t) = 2g-2$.
Straightforwardly,
\begin{align*}
L_{g-1} \left( \frac{1}{qt} \right) q^{g-1} t^{2g-2} =
 \left[ M_g \left( \frac{1}{qt} \right)  : \left( 1 - \frac{1}{qt} \right) \left( 1 - \frac{1}{t} \right) \right] \\
= \frac{M_g (t)}{qt^2} : \frac{(qt-1)(t-1)}{qt^2} =
 \frac{M_g (t)}{(1-t)(1-qt)} = L_{g-1} (t)
 \end{align*}
satisfies the functional equation of the Hasse-Weil polynomial of the function field of a curve of genus $g-1$ over ${\mathbb F}_q$.

The implication $(ii) \Rightarrow (i)$ follows from the functional equation of $L_{g-1} (t)$, applied to $L_g (t) = (1-t)(1 - qt) L_{g-1} (t) + L_g (1) t^g$.
Namely,
\begin{align*}
L_g \left( \frac{1}{qt} \right) q^g t^{2g} \\
= \left[ \left( 1 - \frac{1}{qt} \right) qt \right] \left[ \left( 1 - \frac{1}{t} \right) t \right]
\left[ L_{g-1} \left( \frac{1}{qt} \right) q^{g-1} t^{2g-2} \right] + \frac{L_g (1)}{q^g t^g} q^g t^{2g} \\
= (qt-1)(t-1) L_{g-1} (t) + L_g (1) t^g  \\
= (1-t)(1-qt) L_{g-1} (t) + L_g (1) t^g = L_g (t).
\end{align*}

We derive $(i) \Rightarrow (iii)$ by an induction on $g$, making use of $(ii)$.
More precisely, for $g=1$ one has $L_0 (t) := \frac{L_1(t) - L_1 (1) t}{(1-t)(1 -qt)} \in {\mathbb Q}[t]$ of degree $\deg L_0 (t) =0$ or $L_0 \in {\mathbb Q}$.
Then
\[
L_1 (t) =  (1-t)(1 -qt) L_0 + L_1 (1) t =
\sum\limits _{i=0} ^1 h_i t^{i} (1-t) ^{1-i} (1-qt) ^{1-i}
\]
with $h_0 := L_0 \in {\mathbb Q}$ and $h_1 := L_1 (1) \in {\mathbb Q}$.
In the general case, $(ii)$ provides a polynomial
\[
L_{g-1} (t) := \frac{L_g (t) - L_g (1) t^g}{(1-t)(1 - qt)},
\]
subject to the functional equation
\[
L_{g-1} (t) = L_{g-1} \left( \frac{1}{qt} \right) q^{g-1} t^{2g-2}
\]
 of the Hasse-Weil polynomial of the function field of a curve of genus $g-1$ over
 ${\mathbb F}_q$.
By the  inductional hypothesis, there exist $h'_i \in {\mathbb Q}$, $0 \leq i \leq g-1$ with
\[
L_{g-1} (t) = \sum\limits _{i=0} ^{g-1} h'_i t^{i} (1-t) ^{g-1-i} (1-qt) ^{g-1-i}.
\]
Then
\[
L_g (t) = (1-t)(1 - qt) L_{g-1} (t) + L_g (1) t^g =
\sum\limits _{i=0} ^g h_i t^{i} (1-t)^{g-i} (1 - qt) ^{g-i}
\]
with $h_i := h'_i \in {\mathbb Q}$ for $0 \leq i \leq g-1$ and $h_g := L_g (1) \in {\mathbb Q}$ justifies $(i) \Rightarrow  (iii)$.

Towards  $(iii) \Rightarrow (i)$, let us  assume  that   $L_g(t) = \sum\limits _{i=0} ^g h_i t^{i} (1-t)^{g-i} (1 - qt) ^{g-i}$.
 Then
\begin{align*}
L \left( \frac{1}{qt} \right) q^g t^{2g} = \left[ \sum\limits _{i=0} ^g \frac{h_i}{q^{i} t^{i}} \left( 1 - \frac{1}{qt} \right) ^{g-i} \left( 1 - \frac{1}{t} \right) ^{g-i} \right] q^g t^{2g}  \\
= \sum\limits _{i=0} ^g \left[ \frac{h_i}{q^{i} t^{i}} q^{i} t^{2i} \right] \left[ \left( 1 - \frac{1}{qt} \right) qt \right] ^{g-i}
 \left[ \left( 1 - \frac{1}{t} \right) t \right] ^{g-i}   \\
= \sum\limits _{i=0} ^g h_i t^{i} (qt-1) ^{g-i} (t-1) ^{g-i} = L_g(t)
 \end{align*}
 satisfies the functional equation of the Hasse-Weil polynomial of the function field of a curve of genus $g$ over ${\mathbb F}_q$.

\end{proof}

\begin{proposition}   \label{HasseWeilDecomposition}
Let $F = {\mathbb F}_q(X)$ be the function field of a smooth irreducible curve $X / {\mathbb F}_q \subset {\mathbb P}^N ( \overline{{\mathbb F}_q})$ of genus $g$, defined  over ${\mathbb F}_q$, with $h(F)$ linear equivalence classes of divisors of degree $0$, $\mathcal{A}_i$ effective divisors  of degree $i \geq 0$,
Hasse-Weil polynomial $L_F(t) \in {\mathbb Q}[t]$ and Duursma's reduced polynomial $D_F (t) \in {\mathbb Q}[t]$, defined by the equality
\[
L_F (t) = (1-t)(1 - qt) D_F (t) + h(F) t^g.
\]
Then:

(i) $D_F(t) = \sum\limits _{i=0} ^{g-2} \mathcal{A}_i ( t^{i} + q ^{g-1-i} t^{2g-2-i} ) + \mathcal{A}_{g-1} t^{g-1} \in {\mathbb Z}[t]$
is a polynomial with integral coefficients, which is  uniquely determined by   $\mathcal{A}_0 =1, \mathcal{A}_1, \ldots , \mathcal{A}_{g-1}$;

(ii)  the equality
\begin{equation}  \label{ReducedZetaFunction}
\frac{D_F(t)}{(1-t)(1-qt)} = \sum\limits _{i=0} ^{\infty} \mathcal{B}_i t^{i}
\end{equation}
of formal power series of $t$  holds for
\begin{equation}    \label{LowerBI}
\mathcal{B} _i = \sum\limits _{j=0} ^{i} \mathcal{A}_j \left( \frac{q^{i-j+1} -1}{q-1} \right)
\end{equation}
for $0 \leq i \leq g-1$,
\begin{equation}    \label{MiddleBI}
\mathcal{B}_i = \sum\limits _{j=0} ^{g-1} \mathcal{A}_j \left( \frac{q^{i-j+1}-1}{q-1} \right) + \sum\limits _{j=g} ^{i} \mathcal{A}_{2g-2-j}
\left( \frac{q^{i-g+2}-q^{j-g+1}}{q-1} \right)
\end{equation}
for $g \leq i \leq 2g-3$,
\begin{equation}    \label{HigherBI}
\mathcal{B}_i = D_F(1) \left( \frac{q^{i-g+2}-1}{q-1} \right)
\end{equation}
for $i \geq 2g-2$;

(iii) the natural numbers $\mathcal{B}_i$, $i \geq 0$ from (ii) satisfy the relations
\begin{equation}    \label{ReducedHalfOfSpecialDivisors}
\mathcal{B}_i = q^{i-g+2} \mathcal{B} _{2g-4-i} + D_F(1) \left( \frac{q^{i-g+2} -1}{q-1} \right) \ \ \mbox{  for  } \ \  \forall g-1 \leq i \leq 2g-4;
\end{equation}
\begin{equation}   \label{ReducedNumberEffectiveNonSpecialDivisors}
\mathcal{B}_i = D_F(1) \left( \frac{q^{i-g+2}-1}{q-1} \right) \ \ \mbox{  for  } \ \ \forall i \geq 2g-3.
\end{equation}

(iv) the number $h(F)$ of the linear equivalence classes of the divisors of $F$ of degree $0$ satisfies the inequilities
\[
(\sqrt{q}-1)^{2g} \leq h(F) \leq (\sqrt{q}+1)^{2g}
\]
\end{proposition}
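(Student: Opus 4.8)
The plan is to treat $D_F(t)$ as the Hasse–Weil polynomial of a virtual curve of genus $g-1$ and to read off every assertion from the genuine Hasse–Weil zeta function $Z_F(t)=\frac{L_F(t)}{(1-t)(1-qt)}=\sum_{i=0}^{\infty}\mathcal{A}_i t^{i}$ of $F$. First I would invoke two standard facts from \cite{NX}: that $L_F(t)$ has degree $2g$ and satisfies the functional equation of Lemma \ref{EquivalentConditionsHasseWeilPolynomials}(i), and that $L_F(1)=h(F)$. The defining relation $L_F(t)=(1-t)(1-qt)D_F(t)+h(F)t^{g}$ then identifies $D_F$ with the polynomial $L_{g-1}$ produced in Lemma \ref{EquivalentConditionsHasseWeilPolynomials}(ii) applied to $L_g=L_F$; hence $D_F\in\mathbb{Q}[t]$ has degree $2g-2$ and satisfies $D_F(t)=D_F(\tfrac{1}{qt})q^{g-1}t^{2g-2}$, so that its coefficients $c_i$ obey $c_{g-1+i}=q^{i}c_{g-1-i}$.

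For (i) I would subtract: since $\frac{h(F)t^{g}}{(1-t)(1-qt)}$ involves only powers $t^{\ge g}$, the formal identity $D_F(t)=Z_F(t)-\frac{h(F)t^{g}}{(1-t)(1-qt)}$ forces $c_i=\mathcal{A}_i$ for $0\le i\le g-1$. Feeding this into the functional equation gives $c_{2g-2-i}=q^{g-1-i}\mathcal{A}_i$ for $0\le i\le g-2$, which is exactly the claimed closed form; integrality is then immediate because the $\mathcal{A}_i$ are non-negative integers and the weights $q^{g-1-i}$ are integers. For (ii) I would expand $\frac{1}{(1-t)(1-qt)}=\sum_{j\ge 0}\frac{q^{j+1}-1}{q-1}t^{j}$ and convolve with $D_F(t)$, so that $\mathcal{B}_i=\sum_{l}c_l\frac{q^{i-l+1}-1}{q-1}$; splitting the convolution at the index $g-1$ (below which $c_l=\mathcal{A}_l$, above which $c_l=q^{l-g+1}\mathcal{A}_{2g-2-l}$) yields (\ref{LowerBI}) and (\ref{MiddleBI}) after the elementary simplification $q^{l-g+1}\frac{q^{i-l+1}-1}{q-1}=\frac{q^{i-g+2}-q^{l-g+1}}{q-1}$. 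For the range $i\ge 2g-2$ I would instead sum over the whole polynomial and use $D_F(\tfrac1q)=q^{1-g}D_F(1)$ (the functional equation at $t=1$) to collapse the answer to (\ref{HigherBI}).

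The conceptual heart is (iii), which asserts that $Z_{g-1}(t):=\frac{D_F(t)}{(1-t)(1-qt)}=\sum\mathcal{B}_i t^{i}$ obeys exactly the Riemann–Roch-type duality relations of a genus-$(g-1)$ zeta function with class number $D_F(1)$. Rather than manipulate the functional equation of the rational function $Z_{g-1}$ directly — whose two natural expansions converge in different regions and so cannot be compared coefficientwise — I would isolate and prove a purely algebraic lemma: if $L(t)=\sum_{l=0}^{2\gamma}\ell_l t^{l}$ satisfies $\ell_l=q^{l-\gamma}\ell_{2\gamma-l}$ and $B_i$ are the coefficients of $\frac{L(t)}{(1-t)(1-qt)}$, then $B_i=q^{i-\gamma+1}B_{2\gamma-2-i}+L(1)\frac{q^{i-\gamma+1}-1}{q-1}$ for $0\le i\le 2\gamma-2$ and $B_i=L(1)\frac{q^{i-\gamma+1}-1}{q-1}$ for $i\ge 2\gamma-1$. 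This I would verify from $B_i=\frac{1}{q-1}\sum_{l=0}^{\min(i,2\gamma)}\ell_l(q^{i-l+1}-1)$, re-indexing the upper half of the sum by $m=2\gamma-l$ and substituting $\ell_l q^{-l}=q^{-\gamma}\ell_{2\gamma-l}$; for $i\ge 2\gamma-1$ the term $l=2\gamma$ contributes $0$, so the same computation applies. Applying the lemma with $\gamma=g-1$ and $L=D_F$ delivers (\ref{ReducedHalfOfSpecialDivisors}) and (\ref{ReducedNumberEffectiveNonSpecialDivisors}) at once. This re-indexing bookkeeping, with its shifting summation limits, is where I expect the only real friction, so I would track the endpoints with care.

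Finally, for (iv) I would use $h(F)=L_F(1)$ together with the classical Hasse–Weil Theorem (Theorem 4.2.3 of \cite{NX}): writing $L_F(t)=\prod_{j=1}^{2g}(1-\alpha_j t)$ with $|\alpha_j|=\sqrt q$, one has $h(F)=L_F(1)=\prod_{j=1}^{2g}(1-\alpha_j)$, and each factor satisfies $\sqrt q-1\le|1-\alpha_j|\le\sqrt q+1$ by the triangle inequality. Since $h(F)$ is a positive real, $h(F)=\prod_{j=1}^{2g}|1-\alpha_j|$, and taking the product over the $2g$ roots yields $(\sqrt q-1)^{2g}\le h(F)\le(\sqrt q+1)^{2g}$.
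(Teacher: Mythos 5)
Your proposal is correct, and at two junctures it takes a genuinely different route from the paper. For (i), the paper does not use the functional equation of $D_F(t)$ to pin down the upper coefficients; instead it re-derives the closed form by Riemann--Roch divisor counting: summing $\frac{q^{l(G_{\nu})}-1}{q-1}$ over representatives of the degree-$j$ classes yields $\mathcal{A}_j = q^{j-g+1}\mathcal{A}_{2g-2-j} + h(F)\bigl(\frac{q^{j-g+1}-1}{q-1}\bigr)$ for $g \leq j \leq 2g-2$ and $\mathcal{A}_j = h(F)\bigl(\frac{q^{j-g+1}-1}{q-1}\bigr)$ for $j \geq 2g-1$, after which the tail of $Z_F(t)$ is summed as geometric series and subtracted. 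Your argument --- read off $c_i=\mathcal{A}_i$ for $i\leq g-1$ from the fact that $\frac{h(F)t^g}{(1-t)(1-qt)}$ has order $g$ at $t=0$, then transport to the upper half via $c_{g-1+i}=q^{i}c_{g-1-i}$, which Lemma \ref{EquivalentConditionsHasseWeilPolynomials} legitimately supplies --- is shorter; what the paper's longer route buys is the explicit duality relations among the $\mathcal{A}_j$ themselves, which serve as the template that (iii) mirrors for the $\mathcal{B}_i$. For (iii), the paper argues case by case ($i=g-1$, then $g\leq i\leq 2g-4$, then $i=2g-3$) with the auxiliary polynomial $\psi(t)=\sum_{j=0}^{g-2}\mathcal{A}_jt^{j}$ and the identity $D_F(1)=\psi(1)+\psi\bigl(\frac{1}{q}\bigr)q^{g-1}+\mathcal{A}_{g-1}$, whereas you abstract a single uniform lemma for any $L(t)=\sum_{l=0}^{2\gamma}\ell_l t^{l}$ with $\ell_l=q^{l-\gamma}\ell_{2\gamma-l}$. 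I verified your re-indexing closes: writing $\ell_l q^{i-l+1}=q^{i-\gamma+1}\ell_{2\gamma-l}$ in the first sum and $\ell_l q^{\gamma-l}=\ell_{2\gamma-l}$ in the second, one gets $(q-1)\bigl(B_i-q^{i-\gamma+1}B_{2\gamma-2-i}\bigr)=q^{i-\gamma+1}\bigl(L(1)-\ell_{2\gamma-1-i}\bigr)-\bigl(L(1)-\ell_{i+1}\bigr)$, and the boundary terms cancel because $\ell_{i+1}=q^{i-\gamma+1}\ell_{2\gamma-1-i}$; your observation that the $l=2\gamma$ term vanishes when $i=2\gamma-1$ is exactly right, and your lemma even covers $0\leq i\leq 2\gamma-2$, slightly more than the stated range (the extra cases are equivalent by the symmetry $i\leftrightarrow 2\gamma-2-i$). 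Your approach is more uniform and reusable; the paper's is more self-contained at the level of this one proposition. Parts (ii) and (iv) essentially coincide with the paper's proof (same convolution and the same collapse via $D_F\bigl(\frac{1}{q}\bigr)=q^{1-g}D_F(1)$; in (iv) the paper pairs conjugate roots into factors $q-2\sqrt{q}\cos\varphi_j+1$ where you apply the triangle inequality to $|1-\alpha_j|$, which is equivalent). One small omission: part (iii) also asserts that the $\mathcal{B}_i$ are natural numbers, which the paper proves via $\mathcal{B}_i\geq \mathcal{A}_0\bigl(\frac{q^{i+1}-1}{q-1}\bigr)\geq 1$; your formulas in (ii) give this immediately, but you should say so explicitly.
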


\begin{proof}

(i) By Theorem 4.1.6. (ii) and Theorem 4.1.11 from \cite{NX}, the Hasse-Weil zeta function of $F$ is the generating function
\[
Z_F (t) = \frac{L_F(t)}{(1-t)(1-qt)} = \sum\limits _{j=0} ^{\infty} \mathcal{A}_j t^{j}
\]
of the sequence $\{ \mathcal{A}_i \} _{i=0} ^{\infty}$.
According to Lemma  \ref{EquivalentConditionsHasseWeilPolynomials} and $L_F(1) = h(F)$,
\[
D_F(t) := \frac{L_F(t) - h(F)t^g}{(1-t)(1-qt)}
\]
  is a polynomial  of $\deg D_F (t) = 2g-2$, subject to the functional equation of the Hasse-Weil polynomial of the function field of a curve of genus $g-1$ over ${\mathbb F}_q$.
Thus,
\begin{equation}    \label{HWFunction}
Z_F (t) = D_F(t) + \frac{h(F) t^g}{(1-t)(1-qt)} = \sum\limits _{j=0} ^{\infty} \mathcal{A}_j  t^{j}.
\end{equation}
 Let  $l(G)$ is the dimension of the space $H^0 (X, \mathcal{O}_X(G))$ of the global holomorphic sections of the line bundle $\mathcal{O}_X(G) \rightarrow X$, associated with a divisor $G \in {\rm Div} (F)$.
    Riemann-Roch Theorem asserts that
\[
l(G ) =  l( K_X - G ) + \deg (G) -g+1
\]
for a canonical divisor $K_X$ of $X$.
For any $j \geq g-1$, suppose that  $G_1, \ldots , G_{h(F)} \in {\rm Div } (F)$  is a complete set of representatives of the linear equivalence classes of the  divisors of $F$ of degree $j$.
Then
\begin{equation}    \label{AJFirstPart}
\mathcal{A}_j
  = \sum\limits _{\nu=1} ^{h(F)} \frac{q^{l(G_{\nu})} -1}{q-1}
   =  q^{j-g+1} \sum\limits _{\nu=1} ^{h(F)} \left( \frac{q^{l(K_Y - G_{\nu})} -1}{q-1} \right) + h(F) \left( \frac{q^{j-g+1} -1}{q-1} \right)
\end{equation}
for $ g \leq j \leq 2g-2$  and
\begin{equation}    \label{NumberEffectiveNonSpecialDivisors}
\mathcal{A}_j = h(F) \left( \frac{q^{j-g+1} -1}{q-1} \right) \ \ \mbox{  for  } \ \ \forall j \geq 2g-1.
\end{equation}
 Note that $K_Y - G_1, \ldots , K_Y - G_{h(F)}$  is a complete set of representatives of the linear equivalence classes of the divisors of $F$ of degree
 $2g -2-j$, so that
\begin{equation}    \label{AJSecondPart}
\mathcal{A}_{2g-2-j} = \sum\limits _{\nu =1} ^{h(F)} \frac{q ^{l(K_Y - G_{\nu})} -1}{q-1}.
\end{equation}
Plugging in by (\ref{AJSecondPart}) in (\ref{AJFirstPart}),  one obtains
\begin{equation}     \label{HalfOfSpecialDivisors}
\mathcal{A}_j =  q^{j-g+1} \mathcal{A} _{2g-2-j} +   h(F) \left( \frac{q^{j-g+1} -1}{q-1} \right) \ \ \mbox{  for  } \ \ g \leq j \leq 2g-2,
\end{equation}
whereas
\[
Z_F (t) = \sum\limits _{j=0} ^{g-1} \mathcal{A}_j t^{j} + \sum\limits _{j=g} ^{2g-2} q^{j-g+1} \mathcal{A}_{2g-2-j} t^{j} +
h(F) \sum\limits _{j=g} ^{\infty} \left( \frac{q^{j-g+1} -1}{q-1} \right)  t^{j},
\]
Putting $i := 2g-2-j$ in the second sum and $i := j-g$ in the third sum, one expresses
\begin{align*}
Z_F (t) = \sum\limits _{i=0} ^{g-2} \mathcal{A}_i ( t^{i} + q^{g-1-i} t^{2g-2-i} ) + \mathcal{A}_{g-1} t^{g-1} \\
+ h(F)  \left[  \frac{qt^g}{q-1} \left( \sum\limits _{i=0} ^{\infty} q^{i} t^{i} \right) -
\frac{t^g}{q-1} \left( \sum\limits _{i=0} ^{\infty} t^{i} \right) \right],
\end{align*}
Summing up the geometric progressions
\[
\sum\limits _{i=0} ^{\infty} q^{i} t^{i} = \frac{1}{1 - qt}, \ \
\sum\limits _{i=0} ^{\infty} t^{i} = \frac{1}{1-t},
\]
one derives
\[
Z_F (t) = \sum\limits _{i=0} ^{g-2} \mathcal{A}_i ( t^{i} + q ^{g-1-i} t^{2g-2-i} ) + \mathcal{A}_{g-1} t^{g-1}
+ h(F) \frac{t^g}{(1-t)(1 - qt)},
\]
whereas
\[
D_F (t) = \sum\limits _{i=0} ^{g-2} \mathcal{A}_i ( t^{i} + q^{ g-1-i} t^{2g-2-i} ) + \mathcal{A}_{g-1} t^{g-1}.
\]
In particular, $D_F(t) \in {\mathbb Z}[t]$ has integral coefficients.

(ii) Let us expand
\[
\frac{1}{1-t} = \sum\limits _{i=0} ^{\infty} t^{i}, \quad \frac{1}{1 - qt} = \sum\limits _{i=0} ^{\infty} q^{i} t^{i}
\]
as sums of geometric progressions and note that
\[
\frac{1}{(1-t)(1-qt)} = \sum\limits _{i=0} ^{\infty} (1 + q + \ldots + q^{i}) t^{i} =
\sum\limits _{i=0} ^{\infty} \left( \frac{q^{i+1} -1}{q-1} \right) t^{i}.
\]
Then represent Duursma's reduced polynomial in the form
\begin{equation}     \label{DIsDeterminedByAi}
D_F(t) = \sum\limits _{j=0} ^{g-1} \mathcal{A}_j t^{j} + \sum\limits _{j=g} ^{2g-2} \mathcal{A}_{2g-2-j} q^{j-g+1} t^{j}.
\end{equation}
Now, the comparison of the coefficients of $t^{i}$, $i \geq 0$ from the  left hand side and the right hand side of   (\ref{ReducedZetaFunction}) provides (\ref{LowerBI}), (\ref{MiddleBI}) and
\[
\mathcal{B} _i = \sum\limits _{j=0} ^{g-1} \mathcal{A}_j \left( \frac{q^{i-j+1} -1}{q-1} \right) +
\sum\limits _{j=g} ^{2g-2} \mathcal{A}_{2g-2-j}  q^{j-g+1} \left( \frac{q^{i-j+1}-1}{q-1} \right) \ \ \mbox{  for  } i \geq 2g-2.
\]
The last formula can be expressed in the form
\begin{align*}
\mathcal{B}_i  = \\
= \frac{q^{i+1}}{q-1} \left( \sum\limits _{j=0} ^{q-1} \mathcal{A}_j q^{-j} + \sum\limits _{j=g} ^{2g-2} \mathcal{A}_{2g-2-j} q^{j-g+1} q^{-j} \right)
- \frac{1}{q-1} \left( \sum\limits _{j=0} ^{g-1} \mathcal{A}_j + \sum\limits _{j=g} ^{2g-2} \mathcal{A}_{2g-2} q^{j-g+1} \right) \\
= \frac{q^{i+1}}{q-1} D_F  \left( \frac{1}{q} \right) - \frac{1}{q-1} D_F  (1).
\end{align*}

According to Lemma \ref{EquivalentConditionsHasseWeilPolynomials} $(i) \Rightarrow (ii)$, Duursma's reduced polynomial of $F$ satisfies the functional equation $D_F(t) = D_F \left( \frac{1}{qt} \right) q^{g-1} t^{2g-2}$.
In particular,  $D_F(1) = D_F \left( \frac{1}{q} \right) q^{g-1}$ and there follows (\ref{HigherBI}).

(iii)  Due to $\mathcal{A}_i \geq 0$ for $\forall i \geq 0$,  $\mathcal{B}_i$ are sums of non-negative integers.
Moreover, $\mathcal{B}_i \geq \mathcal{A}_i \left( \frac{q^{i+1}}{q-1} \right) \geq \mathcal{A}_0 = 1 >0$ for $\forall i \geq 0$ reveals that all $\mathcal{B}_i$ are natural numbers.
Towards (\ref{ReducedHalfOfSpecialDivisors}), let us introduce the polynomial $\psi (t) := \sum\limits _ {j=0} ^{g-2} \mathcal{A}_j t^{j} \in {\mathbb Z}[t]$ and express
\begin{align*}
D_F (t)
 = \sum\limits _{j=0} ^{g-2} \mathcal{A}_j t^{j} + q^{g-1} t^{2g-2} \left[ \sum\limits _{j=0} ^{g-2} \mathcal{A}_j (qt) ^{-j} \right] + \mathcal{A}_{g-1} t^{g-1} \\
  = \psi (t) + \psi \left( \frac{1}{qt} \right) q^{g-1} t^{2g-2} + \mathcal{A}_{g-1} t^{g-1}.
 \end{align*}
 In particular,
 \begin{equation}    \label{DFAt1}
 D_F(1) = \psi (1) + \psi \left( \frac{1}{q} \right) q^{g-1} + \mathcal{A}_{g-1}.
 \end{equation}
Straightforwardly,
\begin{align*}
\mathcal{B}_{g-1} - q \mathcal{B}_{g-3}    \\
= \frac{q^g}{q-1} \left( \sum\limits _{j=0} ^{g-2} \mathcal{A}_j q^{-j} \right) - \frac{1}{q-1} \left( \sum\limits _{j=0} ^{g-2} \mathcal{A}_j \right) +
\mathcal{A}_{g-1} -  \\
- \frac{q^{g-1}}{q-1} \left( \sum\limits _{j=0} ^{g-2} \mathcal{A}_j q^{-j} \right) + \frac{q}{q-1} \left( \sum\limits _{j=0} ^{g-2} \mathcal{A}_j \right)  \\
=   \psi \left( \frac{1}{q} \right) q^{g-1} + \psi (1)  + \mathcal{A}_{g-1} = D_F(1).
\end{align*}
That proves (\ref{ReducedHalfOfSpecialDivisors}) for $i = g-1$.
In the case of   $g \leq i \leq 2g-4$ note that  $0 \leq 2g-4-i \leq g-4$ and
\begin{align*}
(q-1) (\mathcal{B}_i - q^{i-g+2} \mathcal{B}_{2g-4-i} )  \\
= \sum\limits _{j=0} ^{g-1} \mathcal{A}_j (q^{i-j+1}-1) + \sum\limits _{j=g} ^{i} \mathcal{A}_{2g-2-j} (q^{i-g+2} - q^{j-g+1}) -
\sum\limits _{j=0} ^{2g-4-i} \mathcal{A}_j (q^{g-1-j} - q^{i-g+2}).
\end{align*}
Changing the summation index of the second sum to $s:=  2g-2-j$, one obtains
\begin{align*}
(q-1) ( \mathcal{B} _i - q^{i-g+2} \mathcal{B}_{2g-4-i} ) \\
= q^{i+1} \left( \sum\limits _{j=0} ^{g-1} \mathcal{A}_j q^{-j} \right) - \left( \sum\limits _{j=0} ^{g-1} \mathcal{A}_j \right)
+ q^{i-g+2} \left( \sum\limits _{s=2g-2-i} ^{g-2} \mathcal{A}_s \right)  \\
 - q^{g-1} \left( \sum\limits _{s= 2g-2-i} ^{g-2} \mathcal{A}_s q^{-s} \right)
- q^{g-1} \left( \sum\limits _{j=0} ^{2g-4-i} \mathcal{A}_j q^{-j} \right) + q^{i-g+2} \left( \sum\limits _{j=0} ^{2g-4-i} \mathcal{A}_j \right).
\end{align*}
An appropriate grouping of the sums yields
\begin{align*}
(q-1) ( \mathcal{B}_i - q^{i-g+2} \mathcal{B}_{2g-4-i}  )  \\
=  \psi \left( \frac{1}{q} \right) q^{i+1} +  \mathcal{A}_{g-1} q^{i-g+2}  - \psi (1)  - \mathcal{A}_{g-1} +  \psi (1) q^{i-g+2}
 -  \psi \left( \frac{1}{q} \right) q^{g-1}  \\
= (q^{i-g+2} -1) \left[ \psi (1) + \psi \left( \frac{1}{q} \right)  q^{g-1}  + \mathcal{A}_{g-1} \right] = D_F(1) (q^{i-g+2} -1).
\end{align*}
That justifies (\ref{ReducedHalfOfSpecialDivisors}).

Note that  (\ref{ReducedNumberEffectiveNonSpecialDivisors}) with $i \geq 2g-2$ coincides with (\ref{HigherBI}).
In the case of $i = 2g-3$,
\[
(q-1)  \mathcal{B}_{2g-3} = \sum\limits _{j=0} ^{g-1} \mathcal{A}_j (q^{2g-2-j} -1) + \sum\limits _{s=1} ^{g-2} \mathcal{A}_s  (q^{g-1}- q^{g-1-s}),
\]
after changing the summation index of the second sum to $s:= 2g-2-j$.
Then
\begin{align*}
(q-1) \mathcal{B}_{2g-3}  \\
 = q^{2g-2} \left( \sum\limits _{j=0} ^{g-2} \mathcal{A}_j q^{-j} \right)   - \left( \sum\limits _{j=0} ^{g-2} \mathcal{A}_j \right)
 +   \mathcal{A}_{g-1} (q^{g-1} -1) +  \\
  + q^{g-1} \left( \sum\limits _{j=0} ^{g-2} \mathcal{A}_j \right) - q^{g-1} \left( \sum\limits _{j=0} ^{g-2} \mathcal{A}_j q^{-j} \right)  \\
 = (q^{g-1} -1) \left[ \psi (1) + \psi \left( \frac{1}{q} \right) q^{g-1} + \mathcal{A}_{g-1} \right] = D_F(1) (q^{g-1} -1),
 \end{align*}
 which is tantamount  to (\ref{ReducedNumberEffectiveNonSpecialDivisors}) with $i  = 2g-3$.

(iv) By the Hasse-Weil Theorem, all the roots of $L_F(t)$ belong to the circle
$S \left( \frac{1}{\sqrt{q}} \right) = \left \{ z \in {\mathbb C} \ \ \vert \ \  |z| = \frac{1}{\sqrt{q}} \right \}$.
The proof of Proposition \ref{RHAImpliesFSD} specifies that
\[
L_F(t) = a_{2g} \prod\limits _{j=1} ^g \left( t - \frac{e^{i \varphi _j}}{\sqrt{q}} \right) \left( t - \frac{e^{-i \varphi _j}}{\sqrt{q}} \right)
\]
for some $\varphi _j \in [0, 2 \pi )$.
The functional equation $L_F(t) = L_F \left( \frac{1}{qt} \right) q^g t^{2g}$ implies that $a_{2g} = q^g a_0$.
Combining with $a_0 = L_F(0) =1$, one gets
\begin{align*}
L_F(t) = \prod\limits _{j=1} ^g ( \sqrt{q}t - e^{i \varphi _j})(\sqrt{q}t - e^{- i \varphi _j}) =
\prod\limits _{j=1} ^g (qt^2 - 2 \sqrt{q} \cos \varphi _j t +1).
\end{align*}
The substitution  $t=1$ provides
\[
h(F) = L_F(1) = \prod\limits _{j=1} ^g (q - 2 \sqrt{q} \cos \varphi _j +1).
\]
However, $\cos \varphi _j \in [ -1, 1]$ requires
\[
(\sqrt{q}-1)^2 \leq q - 2 \sqrt{q} \cos \varphi _j +1 \leq ( \sqrt{q} +1)^2,
\]
whereas
\[
( \sqrt{q}-1)^{2g} \leq h(F) = L_F(1) = \prod\limits _{j=1} ^g (q - 2 \sqrt{q} \cos \varphi _j +1) \leq ( \sqrt{q} +1)^{2g}.
\]

\end{proof}

\newpage

 \end{document}